\newcommand{\script}[1]{{\mathscr #1}}
\newcommand{\cA}{{\cal A}}
\newcommand{\cB}{{\cal B}}
\newcommand{\cE}{{\cal E}}
\newcommand{\cG}{{\cal G}}
\newcommand{\cN}{{\cal N}}
\newcommand{\cP}{{\cal P}}
\newcommand{\cV}{{\cal V}}
\newcommand{\sC}{\script{C}}
\newcommand{\sG}{\script{G}}
\newcommand{\sN}{\script{N}}
\newcommand{\sP}{\script{P}}
\newcommand{\bX}{\mathbf{X}}
\newcommand{\bY}{\mathbf{Y}}
\newcommand{\bZ}{\mathbf{Z}}
\newcommand{\bx}{\mathbf{x}}
\newcommand{\by}{\mathbf{y}}
\newcommand{\bp}{\mathbf{p}}
\newcommand{\bc}{\mathbf{c}}
\newcommand{\be}{\mathbf{e}}
\renewcommand{\le}{\leqslant}
\renewcommand{\ge}{\geqslant}
\newcommand{\deff}{\mbox{$\stackrel{\rm def}{=}$}}
\newtheorem{thm}{Theorem} 
\newtheorem{cor}{Corollary}
\newtheorem{lem}{Lemma}
\newtheorem{prop}{Proposition}
\newtheorem{rmk}{Remark}
\newtheorem{definition}{Definition}
\newcommand{\eq}[1]{(\ref{#1})}
\newcommand{\Tref}[1]{Theo\-rem\,\ref{#1}}
\newcommand{\Pref}[1]{Pro\-po\-si\-tion\,\ref{#1}}
\newcommand{\Lref}[1]{Lem\-ma\,\ref{#1}}
\newcommand{\Cref}[1]{Co\-ro\-lla\-ry\,\ref{#1}}
\newcommand{\Aref}[1]{Al\-go\-rithm\,\ref{#1}}
\newcommand{\ep}{\varepsilon}
\newcommand{\F}{\mathbb{F}_p}
\newcommand{\R}{\mathbb{R}}
\newcommand{\grad}{\nabla}
\newcommand{\Infeas}{\text{Infeas}}
\begin{document}


\title
{
Efficient Capacity Computation and Power Optimization for Relay Networks
}

\author{Farzad Parvaresh and Ra\'ul Etkin

\thanks{F. Parvaresh and R. Etkin are with Hewlett-Packard Laboratories, Palo Alto, CA 94304, USA. (emails: \{parvaresh, raul.etkin\}@hp.com). }
\thanks{
Preliminary conference versions of the results in this paper 
appeared in~\cite{PE11}.}
}

\maketitle


\begin{abstract}
The capacity or approximations to capacity of various single-source single-destination relay network models has been characterized in terms of the cut-set upper bound. In principle, a direct computation of this bound requires evaluating the cut capacity over exponentially many cuts. We show that the minimum cut capacity of a relay network under some special assumptions can be cast as a minimization of a submodular function, and as a result, can be computed efficiently. We use this result to show that the capacity, or an approximation to the capacity within a constant gap for the Gaussian, wireless erasure, and Avestimehr-Diggavi-Tse deterministic relay network models can be computed in polynomial time. We present some empirical results showing that computing constant-gap approximations to the capacity of Gaussian relay networks with around 300 nodes can be done in order of minutes.

For Gaussian networks, cut-set capacities are also functions of the powers assigned to the nodes. We consider a family of power optimization problems and show that they can be solved in polynomial time. In particular, we show that the minimization of the sum of powers assigned to the nodes subject to a minimum rate constraint (measured in terms of cut-set bounds) can be computed in polynomial time. We propose an heuristic algorithm to solve this problem and measure its performance through simulations on random Gaussian networks. We observe that in the optimal allocations most of the power is assigned to a small subset of relays, which suggests that network simplification may be possible without excessive performance degradation.
\end{abstract}

\begin{keywords} capacity, network simplification, power allocation, relay networks, submodular optimization.
\end{keywords}


\section{Introduction}
Relay networks, where one or more source nodes send information to one or more destination nodes with the help of intermediate nodes acting as relays, are often used to model communication in wireless sensor networks. In sensor networks, sensor nodes have limited power sources and often require multi-hop communication with the help of intermediate nodes to reach the data aggregation centers. To guide the design of these networks it is of interest to characterize fundamental communication limits such as the capacity, which represents the maximum reliable communication rate. 

Various communication models for relay networks capture in an abstract setting different aspects of practical systems. The wireless erasure network model of~\cite{Dana} captures the effect of packet losses in the wireless setting. The deterministic network model of Avestimehr, Diggavi and Tse (ADT)~\cite{Salman-deterministic} incorporates broadcast and interference and can be used to gain insights about communication in more complex models that incorporate noise. Among these, of special importance is the Gaussian relay network, which models power limited transmitters and received signals corrupted by additive white Gaussian noise. 

While the capacity of some network models (e.g. wireless erasure and ADT) is well characterized, the capacity of the Gaussian relay network, even in its simplest form with one transmitter, one relay, and one receiver, is in general unknown. The best known capacity upper bound is the so-called {\em cut-set bound}. A cut $\Omega$ of a network can be considered as a subset of nodes which includes the source node and excludes the destination node. For this cut, the capacity 
$F(\Omega)$ is defined as the maximum rate that 
information can be transferred form the nodes in $\Omega$ to the nodes that are not in $\Omega$ conditioned 
on the fact the information on $\Omega^c$ (the nodes that are not in $\Omega$) is known. The cut-set upper bound is the {\em minimum} cut capacity over all the possible cuts.

In the Gaussian setting, there are several capacity lower bounds based on different communication schemes, such as amplify-and-forward, decode-and-forward, compress-and-forward, quantize-and-forward, etc.~\cite{Brett, Cover2, Kramer}. Recently, Avestimehr, et al.~\cite{Salman-Gaussian} made significant progress in the capacity characterization of Gaussian relay networks by showing that a quantization and coding communication scheme can achieve a communication rate within a constant gap of the cut-set upper bound, where the gap only depends on the number of nodes in the network (i.e. it is independent of the channel gains and power levels). However, the evaluation of the achievable communication rate, which is necessary to implement the scheme, requires the computation of the cut-set bound for the network. Assuming that for a given cut the cut capacity is easy to compute, finding the cut-set upper bound can be a challenging problem. For a network with $n$ relays there are $2^n$ different cuts and a greedy algorithm needs exponential time in the number of relays to compute the cut-set bound. 

In this work we show that the achievable rate of the scheme of~\cite{Salman-Gaussian} for the Gaussian relay network can be computed in polynomial time, and as a result, can be computed efficiently. This result is obtained by showing that the cut capacity of a fairly large class of networks under the assumption of independent encoding at the nodes in $\Omega$ is a submodular function. For the special case of layered relay networks, \cite{RV11} showed the equivalent of our submodularity result simultaneously with our conference version of this paper \cite{PE11}. Submodularity properties of conditional entropy (in terms of which cut-capacities are expressed) have also been used in \cite{Salman-Gaussian, ADT11} to bound the cut-capacity of a network in terms of the cut-capacity of the corresponding unfolded graph\footnote{Please, refer to \cite{Salman-Gaussian} for the definition of an unfolded graph.}. 

Existing results on minimization of submodular functions provide algorithms with polynomial running time $O(n^5 \alpha + n^6)$,
 where $\alpha$ is the time that it takes to compute $F(\Omega)$ and $n$ is the number of nodes in the network~\cite{Orlin}. In addition, there exist possibly faster algorithms without polynomial time performance guarantees based on Wolfe's minimization norm algorithm~\cite{Fujishige2}.  In Section~\ref{sec:simulations}, by simulations, we show that the cut-set bound for a Gaussian relay network with around 300 nodes can be computed on a laptop computer in about a minute using a Matlab package for submodular minimization provided in~\cite{Krause}. 
 

Our results, extend and generalize previous results for the ADT model. This model can be seen as a high signal-to-noise-ratio (SNR) approximation of the Gaussian model, incorporating the effects of broadcasting and superposition of signals while de-emphasizing the effects of noise.  Amaudruz et al.~\cite{Fragouli} showed that the cut-set bound for a {\em layered}\footnote{In a layered network, the nodes in one layer are only connected to the nodes in the next adjacent layer. In particular, there is no direct connection from source to destination.} ADT model can be computed efficiently. They have extended graph flow algorithms such as Ford-Fulkerson's in a nontrivial way to find the maximum possible {\em linearly independent} ({LI}) paths in the network. They showed that the capacity of the network is equal to the maximum number of ({LI}) paths and can be computed in time $O(M\cdot |E|\cdot C^5)$, where $M$ is the maximum number of nodes per layer, $|E|$ is the total number of edges and $C$ is the capacity of the network. Moreover, they showed that the capacity can be achieved by using a very simple one-bit processing at the relay nodes. Later Goemans et al.~\cite{Goemans} showed that the deterministic model is a special case of a flow model based on linking systems,
a combinatorial structure with a tight connection to matroids. As a by-product, they obtained the submodularity of the cut capacity for layered ADT networks.  Using this observation they provided various algorithms related to matroid theory to compute the cut capacity of the layered deterministic model based on finding intersection or partition of matroids. These results led to faster algorithms to compute the capacity of large layered ADT networks. In addition, there has been
other extensions on improving the running time of the current algorithms for computing the capacity of ADT networks ~\cite{Ebrahimi, Erez, Shi, Yazdi}.

In addition to showing that the capacity within a constant gap of the Gaussian relay network can be computed in polynomial time, our results allow us to compute in polynomial time the capacity of the wireless erasure network. Furthermore, we provide a simple proof for the computability in polynomial time of the capacity of the layered and non-layered ADT networks. 

Building on the submodularity of the cut-capacity for independent encoding at the nodes, we show that, in the Gaussian setting, it is possible to efficiently optimize the power allocated to the source and relay nodes. We consider two power optimization problems: (i) minimize the total power satisfying a minimum source-destination data rate constraint and power constraints at each node; (ii) maximize the source-destination data rate satisfying total and individual power constraints at the nodes. Since the capacity of the Gaussian relay network is approximately given by the cut-set upper bound with independent encoding at the nodes, we use this cut-set bound to characterize data rate in the optimization problems. We show that these optimization problems can be solved in polynomial time and use simulations to get insights about some properties of the optimal power allocations for networks of various sizes. We observe that optimal power allocations assign most of the power to a small subset of nodes and that setting the power to zero in the remaining nodes (i.e. removing these nodes from the network) often results in a small rate loss. Nazaroglu, et al. showed in \cite{NOEF11} that for the special case of the $N$-relay Gaussian diamond network a fraction $k/(k+1)$ of the total capacity can be approximately achieved by using only $k$ of the total $N$ relays. This suggests that the diamond network can be significantly simplified by tolerating a small performance loss. Our results provide a numerical counterpart to the fundamental performance bounds derived in \cite{NOEF11} and suggest that network simplification may also be possible in more general Gaussian relay networks.

We obtain these results by considering a general framework to compute the cut-set bound. We assign transmit signal random variable $X_i$ to node
 $i \in \cV$ and we assume the probability distribution over the signals $X_1, X_2, \ldots, X_n$ to be independent, i.e
 $p(X_1, X_2, \ldots, X_n) = p_1(X_1) p_2(X_2) \cdots p_n(X_n)$. We also assign received signal random variables
 $Y_i$'s to each node. The network is defined by the transition probability function $f(Y_1, Y_2, \ldots, Y_n | X_1, X_2, \ldots, X_n)$.
 We further assume that the transition probability function is of the form 
 $f_1(Y_1 | X_1, \ldots, X_n)                   
\cdots f_n(Y_n | X_1, \ldots, X_n)$,
 meaning that the received signals are independent conditioned on the transmitted signals in the network. For such networks we
 show that $F(\Omega) = I(\bY_{\Omega^c}; \bX_\Omega | \bX_{\Omega^c})$\footnote{See Section \ref{sec:notation} for a definition of the notation $\bX_\Omega$, $\bY_{\Omega^c}$, etc.} is submodular with respect to $\Omega$.
 Later we show that for ADT networks, the Gaussian relay network and the wireless erasure network, we can find $p_1(X_1)\cdots p_n(X_n)$ such that $\min_\Omega F(\Omega)$ becomes equal to the capacity or the capacity within a constant gap. In other words, the min-cut problem for these networks can be cast as a minimization of a submodular function.
 
The paper is organized as follows. In Section \ref{sec:submodular} we show that for specific type of networks the cut value, $F(\Omega)$, is a submodular function. We then show in Section \ref{sec:models} that for many wireless network models such as the ADT deterministic network,
Gaussian relay network and wireless erasure network the capacity or an approximation to the capacity can be cast as a minimization of $F(\Omega)$. In Section \ref{sec:power_optimization} we study two power optimization problems and show that they can be solved efficiently. Finally, in Section \ref{sec:simulations} we describe results related to solving optimization problems involving submodular functions and perform power optimization in various randomly generated networks of different sizes. We start by introducing the notation used in the rest of the paper.


\section{Notation}
\label{sec:notation}
Let $\cV$ denote the set of nodes in the network and $|\cV|$ its cardinality.
 For any subset $A$ of nodes we denote by $\cV\backslash A$ or $A^c$ the set of 
nodes in $\cV$ that are not in $A$. We assume $\cV \backslash \cA\cup\cB = \cV \backslash (\cA\cup\cB)$.
A cut $\Omega$ is defined as a subset of nodes in $\cV$. A cut splits
the nodes in the network into two groups, the nodes that are in $\Omega$ 
and the ones that belong to $\cV\backslash \Omega$. Random variables are shown in
capital letters such as $X_i$ and $Y_i$. We use boldface letter for vectors, e.g. $\mathbf{x}$ is a constant vector and $\mathbf{X}$ is a random vector. We use $\bX_\Omega$ to denote $(X_{v_1}, X_{v_2}, \ldots, X_{v_{|\Omega|}})$ with $v_i\in \Omega$. The function $I(X;Y|Z)$ is the mutual
information between random variables $X$ and $Y$ conditioned on random variable
$Z$. With a slight abuse of notation we use $H(X)$ to denote either the entropy or differential
entropy of the discrete or continuous random variable $X$~\cite{Cover}. By $\F$ we denote a finite field with $p$ elements. Finally, all the $\log(\cdot)$ functions are in base two.


\section{Submodularity of cut-set function }
\label{sec:submodular}

Submodularity arises in many combinatorial optimization problems and large
body of research has been developed on minimizing
or maximizing submodular functions under various constraints.

A submodular function $f \ : \ 2^\cV \rightarrow \mathbb{R}$ is defined as a function over subsets
of $\cV$ with {\em diminishing marginal returns}, i.e. 
if $A, B \subseteq  \cV$ with $A \subseteq  B$ and any 
$v \in \cV \backslash B$,
$$f(A \cup v) - f(A) \ge f(B \cup v) - f(B).$$
 
The theorem below establishes the submodularity of the cut capacity function of a general relay network under some special assumptions. This theorem will be used in Section~\ref{sec:models} to prove that the capacity or an approximation to the capacity of various specific relay network models can be computed by minimizing a submodular function.

\begin{thm}
\label{thm-submodular-mutual-information}
Consider a network consisting of nodes in $\cV$. Each
node sends a message $X_i, i \in \cV$ and receives 
$Y_i, i \in \cV$. If the messages are independent 
$p(X_1, X_2, \ldots, X_{|\cV|}) = p_1(X_1)p_2(X_2) \cdots p_{|\cV|}(X_{|\cV|})$
and conditioned on the sent messages the
received messages are independent, then the function
$$
F(A) \ = \ I(\bX_A; \bY_{\cV\backslash  A} | \bX_{\cV\backslash  A}) \ \ , \ A \subseteq \cV
$$
is submodular.
\end{thm}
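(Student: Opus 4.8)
The plan is to rewrite $F(A)$ as the sum of a single submodular term and several modular (additive) terms, arranging things so that the two independence hypotheses are exactly what collapse every potentially non-submodular piece down to a modular one. Since adding a modular function preserves submodularity, the result then follows by assembling the pieces.

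First I would expand the conditional mutual information into conditional entropies. Because $(\bX_A,\bX_{\cV\backslash A})=\bX_\cV$, this yields
$$F(A) = H(\bY_{\cV\backslash A}\mid \bX_{\cV\backslash A}) - H(\bY_{\cV\backslash A}\mid \bX_\cV).$$
I would then simplify each entropy using a different hypothesis. By the conditional independence of the received signals given all transmitted signals, the second term factorizes as $H(\bY_{\cV\backslash A}\mid \bX_\cV)=\sum_{i\in\cV\backslash A}H(Y_i\mid\bX_\cV)$, a modular function of $A$. For the first term I would pair each node's variables by setting $Z_i=(X_i,Y_i)$ and write $H(\bY_{\cV\backslash A}\mid\bX_{\cV\backslash A})=H(\bZ_{\cV\backslash A})-H(\bX_{\cV\backslash A})$, where the independence of the transmitted signals makes $H(\bX_{\cV\backslash A})=\sum_{i\in\cV\backslash A}H(X_i)$ modular as well. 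Collecting the terms gives
$$F(A)=H(\bZ_{\cV\backslash A})-\sum_{i\in\cV\backslash A}H(X_i)-\sum_{i\in\cV\backslash A}H(Y_i\mid\bX_\cV).$$

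The crux is the single remaining term $A\mapsto H(\bZ_{\cV\backslash A})$. Here I would appeal to the classical fact that the entropy of a sub-collection of jointly distributed random variables, $B\mapsto H(\bZ_B)$ with $\bZ_B=(Z_i)_{i\in B}$, is submodular — equivalently $H(\bZ_P)+H(\bZ_Q)\ge H(\bZ_{P\cup Q})+H(\bZ_{P\cap Q})$, which is just the nonnegativity of conditional mutual information (and holds equally for discrete and differential entropy, covering the abuse of notation in $H$). I would also record the elementary remark that complementation preserves submodularity: if $g$ is submodular then so is $A\mapsto g(\cV\backslash A)$, verified directly from the diminishing-returns inequality with the two set arguments interchanged. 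Hence $A\mapsto H(\bZ_{\cV\backslash A})$ is submodular, and adding the two modular sums leaves $F$ submodular.

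The only real subtlety — and where I expect to spend the most care — is ensuring each hypothesis is invoked exactly where it is needed: conditional independence of the $Y_i$ given $\bX_\cV$ is what renders $H(\bY_{\cV\backslash A}\mid\bX_\cV)$ modular, and independence of the $X_i$ is what renders $H(\bX_{\cV\backslash A})$ modular. If either assumption were dropped, the corresponding term would be only submodular rather than modular, and a difference of submodular functions need not be submodular, so the decomposition would no longer close. I would therefore state the submodularity-of-entropy lemma and the complementation remark explicitly before writing the decomposition, so that the final step reduces to a one-line combination of a submodular function with modular functions.
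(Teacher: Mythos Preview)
Your proof is correct, but it takes a genuinely different route from the paper. The paper verifies the diminishing-returns definition directly: it expands $F(A\cup a)-F(A)$ via the chain rule into four entropy terms, and then argues that each term is monotone in $A$ in the appropriate direction (using only that conditioning reduces entropy, together with the two independence hypotheses to strip away irrelevant conditioning variables). Your argument instead decomposes $F(A)$ globally as $H(\bZ_{\cV\setminus A})$ plus modular terms, and then invokes the classical submodularity of joint entropy together with the observation that complementation preserves submodularity. Both proofs use the two independence assumptions in essentially the same places, but yours packages them as ``these terms become modular'' while the paper packages them as ``these terms become monotone.'' Your approach is arguably cleaner and more structural, since it reduces everything to a single well-known lemma (submodularity of $B\mapsto H(\bZ_B)$); the paper's approach is more self-contained, requiring no external lemmas beyond basic entropy identities, and it makes the role of each hypothesis visible term-by-term in the marginal-gain expression.
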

 
\begin{proof}
To show that $F(A)$ is submodular we show that $F(A \cup a) -F(A)$ is monotonically
non-increasing in $A$ for $a \notin A$. 
\begin{align*}
F( A\cup & a) =  I(\bX_{A\cup a} ; \bY_{\cV\backslash  A\cup a} | \bX_{\cV\backslash  A\cup a}) 
\\
 \stackrel{(a)}{=} & H(\bX_{A\cup a} | \bX_{\cV\backslash  A\cup a}) - 
              H(\bX_{A\cup a} | \bY_{\cV\backslash  A\cup a} , \bX_{\cV\backslash  A\cup a}) 
\\
  \stackrel{(b)}{=} & H(\bX_{A}) + H(X_a | \bX_{A}) - H(X_a | \bY_{\cV\backslash  A\cup a} , \bX_{\cV\backslash  A\cup a}) \\
    &- H(\bX_A | X_a, \bY_{\cV\backslash  A\cup a} , \bX_{\cV\backslash  A\cup a}) 
\\
  = & H(\bX_{A}) + H(X_a | \bX_{A}) - H(X_a | \bY_{\cV\backslash  A\cup a} , \bX_{\cV\backslash  A\cup a}) \\
   & -  H(\bX_A | \bY_{\cV\backslash  A\cup a} , \bX_{\cV\backslash  A})
\end{align*}
where (a) is the definition of mutual information and (b)
is from the chain rule for the entropy function.
Therefore, 
\begin{align*}
F(A\cup & a) - F(A)\nonumber\\
 = &
 H(X_a | \bX_A) - H(X_a | \bY_{\cV\backslash  A\cup a}, \bX_{\cV\backslash  A\cup a}) 
  \\
& \hspace{7ex}- H(\bX_A | \bY_{\cV\backslash  A\cup a}, \bX_{\cV\backslash  A}) \\
& \hspace{7ex}+ H(\bX_A | \bY_{\cV\backslash  A \cup a}, Y_a , \bX_{\cV\backslash  A})
\\
  = &  H(X_a | \bX_A) - H(X_a | \bY_{\cV\backslash  A\cup a}, \bX_{\cV\backslash  A\cup a}) \\
&\hspace{7ex}-I(\bX_A; Y_a | \bY_{\cV\backslash A \cup a}, \bX_{\cV\backslash A}) 
\\
  = & H(X_a | \bX_A) - H(X_a | \bY_{\cV\backslash  A\cup a}, \bX_{\cV\backslash  A\cup a}) 
\nonumber \\
&\hspace{7ex}- H(Y_a | \bY_{\cV\backslash  A\cup a}, \bX_{\cV\backslash  A}) \\
&\hspace{7ex}+ H(Y_a| \bX_A, \bY_{\cV\backslash  A\cup a}, \bX_{\cV\backslash  A}) 
\\
  = & 
 \hspace{-3ex}
\underbrace{H(X_a | \bX_A)}_{\mbox{non-increasing in} \ A} 
\hspace{-2ex}-\hspace{1ex}
\underbrace{
H(X_a | \bY_{\cV\backslash  A\cup a}, \bX_{\cV\backslash  A}) 
}_{\mbox{nondecreasing in} \ A} \nonumber\\
&\hspace{7ex}-\underbrace{
H(Y_a | \bY_{\cV\backslash  A\cup a}, \bX_{\cV\backslash  A}) 
}_{\mbox{nondecreasing in} \ A}
+ H(Y_a | \bX_{\cV})
\end{align*}
where the last equality follows because $Y_a$ is independent of $\bY_{\cV\backslash A\cup a}$
conditioned on $\bX_{\cV}$. So, $F(A\cup a) - F(A)$ is non-increasing in $A$
and thus $F(A)$ is submodular.
\end{proof}

In the following example we show that if the signals at the nodes are correlated then 
$F(A)$ is not necessarily a submodular function.

\noindent
{\bf Example.} Consider a symmetric Gaussian diamond network with
two relays such that the channel gains from source to relays are
equal to one and from relays to destination are equal to three.
Letting $\bX_s, \bX_{r_1}$, and $\bX_{r_2}$ be the signals transmitted at the source and relay nodes, 
then the received signals at relays and destination are given by
$$
\left(
\begin{array}{c}
\bY_{r_1} \\ \bY_{r_2} \\ \bY_d
\end{array}
\right) = 
\left(
\begin{array}{ccc}
1 & 0 & 0 \\
1 & 0 & 0 \\
0 & 3 & 3 
\end{array}
\right)
\left(
\begin{array}{c}
\bX_s \\ \bX_{r_1} \\ \bX_{r_2}
\end{array}
\right) +
\left(
\begin{array}{c}
\bZ_{r_1} \\ \bZ_{r_2} \\ \bZ_d
\end{array}
\right)
$$
where $\bZ_{r_1}, \bZ_{r_2}, \bZ_d$ are i.i.d. $\cN(0,1)$.
For this example, we set the probability distribution of $\bX_s, \bX_{r_1}, \bX_{r_2}$
to be jointly Gaussian with zero mean and covariance matrix
$$
\Sigma = \left( 
\begin{array}{ccc}
1 & 0 & 0 \\
0 & 1 & \rho \\
0 & \rho & 1
\end{array}
\right).
$$
Finally, consider the sets $A=\{s,r_1\}$ and $B=\{s,r_2\}$.
Figure~\ref{fig:correlated_sources} shows how the function
$F(A) + F(B) - F(A \cup B) - F(A \cap B)$ varies for
different values of the correlation coefficient (between $\bX_{r_1}$ and $\bX_{r_2}$) $\rho \in [0,1]$.
We see that $F(A)+F(B)$ can be greater than or less than 
$F(A \cup B) + F(A \cap B)$ depending on the value of $\rho$. 
It follows that in general $F(\cdot)$ is not a submodular or a supermodular function 
when the there is correlation among the signals at the nodes.

\begin{figure}[tb]

\centering
\includegraphics[width=0.50\textwidth]{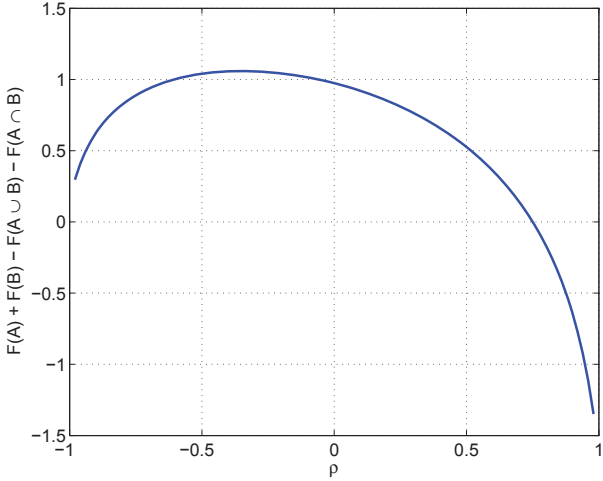}

\caption{$F(A) + F(B) - F(A \cup B) - F(A \cap B)$ as a function of the correlation coefficient $\rho$. In general, $F(\cdot)$ is neither submodular or supermodular when the signals are correlated.}
\label{fig:correlated_sources}

\end{figure}


\section{Wireless network models}
\label{sec:models}
In this section, by applying the result of \Tref{thm-submodular-mutual-information}, we show that the capacity or an approximation to the capacity for the ADT deterministic network, Gaussian relay network, and wireless erasure network can be cast as a minimization of a submodular function. 

\subsection{Deterministic model (ADT)}
We start by briefly describing the network model of \cite{Salman-deterministic}\footnote{Please, refer \cite{Salman-deterministic} for a more complete description of the model and its motivation.}. In this model, each link from node $i$ to node $j$ has an associated non-negative integer gain $n_{ij}$. Each node $i\in \cV$ transmits a signal $\bX_i$ and receives a signal $\bY_i$, both in $\F^q$ where $q=\max_{i,j} n_{ij}$. At any given time, the received signal at node $j$ is given by
\begin{equation}
\bY_j=\sum_{i\in \cV \backslash\{d\}} \mathbf{S}^{q-n_{ij}}\bX_i
\label{eq:detIO}
\end{equation}
where $d$ is the destination node, the shifting matrix $\mathbf{S}$ is given by
\[
\mathbf{S}=\left(
\begin{array}{ccccc}
0 & 0 & 0 & \cdots & 0\\
1 & 0 & 0 & \cdots & 0\\
0 & 1 & 0 & \cdots & 0\\
\vdots & \ddots & \ddots & \ddots & \vdots\\
0 & \cdots & 0 & 1 & 0\\
\end{array}
\right)
\]
and the sums and products are in $\F$.

For a given cut $\Omega$ of the network, where $\Omega$ includes the source node and excludes the destination node, we can stack together the input vectors $\bX_i, i\in \Omega$ and output vectors $\bY_i, i\in \Omega^c$, and define a transition matrix $\Lambda_\Omega$ that gives the input-output relationship of these vectors according to (\ref{eq:detIO}). It is shown in~\cite{Salman-Gaussian} that the capacity
of the deterministic network is equal to $\min_{\Omega} \text{rank}(\Lambda_\Omega)$. We show next in Theorem \ref{thm:3} that $\text{rank}(\Lambda_\Omega)$ is submodular, and hence the capacity can be computed by minimizing a submodular function. 

\begin{prop}
\label{lemma-deterministic}
Assume an $m\times n$ matrix $A$ over $\F$. Let $\cN$ 
be the subspace
$
\cN \ \deff \ \left\{\bx \in \F^n \ | \ A\bx = 0 \right\}
$,
and let $\sG$ be the set of cosets of  $\cN$ in $\F^n$. Pick $\hat \bx_i$ to be an element in the $i$th coset
of $\cN$ for $i = 1,2, \ldots, |\sG|$, and 
 set $\by_i = A \hat \bx_i$. Notice that $\by_i \neq \by_j$ if $i \neq j$. Now, if we choose $\bx$  uniformly at random from elements of  $\F^n$ with probability $1/|\F^n|$, then the mapping 
$A\bx$ maps $\bx$ to $\{\by_1, \by_2, \ldots, \by_{|\sG|} \}$ uniformly at random with probability $1/|\sG|$. 
In addition, the cosets of $\cN$ form a partition of $\F^n$ into $p^{n}/|\cN|$ sets. Also $\text{rank}(A)+\log_p(|\cN|)=n$. Thus, $\log_p |\sG| = \text{rank}(A)$.
\end{prop}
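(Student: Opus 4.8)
The plan is to treat every assertion as a consequence of the structure of the single $\F$-linear map $T : \F^n \to \F^m$ given by $T(\bx) = A\bx$, whose kernel is exactly $\cN$. I would prove the claims in the order: (i) distinctness of the $\by_i$; (ii) the coset partition and its count; (iii) uniformity of the induced distribution; and finally (iv) the rank identity, from which the concluding equation $\log_p|\sG| = \text{rank}(A)$ drops out.

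First I would establish the single observation that links cosets to images: for $\bx, \bx' \in \F^n$ we have $A\bx = A\bx'$ if and only if $A(\bx - \bx') = 0$, i.e. $\bx - \bx' \in \cN$, which is precisely the condition that $\bx$ and $\bx'$ lie in the same coset of $\cN$. This immediately gives the distinctness claim: since $\hat\bx_i$ and $\hat\bx_j$ represent distinct cosets for $i \neq j$, their difference is not in $\cN$, so $\by_i = A\hat\bx_i \neq A\hat\bx_j = \by_j$. The same observation shows $A$ is constant on each coset, so $\by_i$ depends only on the coset and not on the chosen representative.

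Next, since $\cN$ is an additive subgroup of $\F^n$, its cosets partition $\F^n$, and each coset has exactly $|\cN|$ elements because the translation $\bx \mapsto \bx + \bc$ is a bijection. Counting points gives $|\sG| \cdot |\cN| = |\F^n| = p^n$, i.e. there are $p^n/|\cN|$ cosets, which is the partition claim. For the uniformity claim, observe that if $\bx$ is drawn uniformly from $\F^n$ then the event $A\bx = \by_i$ is exactly the event that $\bx$ lands in the $i$th coset; since that coset contains $|\cN|$ of the $p^n$ equally likely points, its probability is $|\cN|/p^n = 1/|\sG|$, independent of $i$.

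It remains to identify $|\sG|$ with $p^{\text{rank}(A)}$. Here I would invoke the rank--nullity theorem for $T$ over $\F$: $\dim_{\F}\cN = n - \text{rank}(A)$, so $|\cN| = p^{\,n - \text{rank}(A)}$ and hence $\text{rank}(A) + \log_p|\cN| = n$. Combining with $|\sG| = p^n/|\cN|$ yields $|\sG| = p^{\text{rank}(A)}$, i.e. $\log_p|\sG| = \text{rank}(A)$. None of these steps is individually hard --- they are standard facts about kernels, cosets, and dimension over a finite field --- so the real work is only in assembling them cleanly. The one place to take care is justifying that $A$ is well defined and injective on cosets, since the resulting bijection between $\sG$ and the image of $T$ is the conceptual hinge on which both the distinctness claim and the counting argument rest.
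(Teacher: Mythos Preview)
Your proposal is correct and matches the paper's reasoning. The paper does not give a separate proof of this proposition; the justifications are folded into the statement itself (``Notice that $\by_i \neq \by_j$\ldots'', ``the cosets of $\cN$ form a partition\ldots'', ``Also $\text{rank}(A)+\log_p(|\cN|)=n$''), and your write-up is simply a clean expansion of exactly those assertions via the kernel characterization $A\bx = A\bx' \iff \bx - \bx' \in \cN$, Lagrange's counting argument, and rank--nullity.
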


\begin{thm}
\label{thm:3}
For a deterministic model, given a cut $\Omega$ assume $\Lambda_\Omega$
is the transition matrix form nodes in $\Omega$ to nodes in $\Omega^c$.
Set $D(\Omega) = rank(\Lambda_\Omega)$, then $D(\Omega)$ is submodular.
\end{thm}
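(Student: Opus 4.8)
The plan is to realize $D(\Omega)=\text{rank}(\Lambda_\Omega)$ as a positive scalar multiple of the mutual-information cut function $F(\Omega)$ of \Tref{thm-submodular-mutual-information}, evaluated for one carefully chosen input distribution, and then simply invoke that theorem. Concretely, I would assign to each node $i$ a transmit vector $\bX_i$ drawn uniformly at random from $\F^q$, with the $\bX_i$ independent across nodes. Since the deterministic channel \eqref{eq:detIO} makes each $\bY_j$ a deterministic (indeed linear) function of the inputs, the received signals are trivially independent conditioned on $\bX_1,\ldots,\bX_{|\cV|}$ (each conditional law is a point mass). Thus the hypotheses of \Tref{thm-submodular-mutual-information} are satisfied, and $F(\Omega)=I(\bX_\Omega;\bY_{\Omega^c}\mid \bX_{\Omega^c})$ is submodular in $\Omega$.

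The second step is to evaluate $F(\Omega)$ in closed form. Writing $F(\Omega)=H(\bY_{\Omega^c}\mid\bX_{\Omega^c})-H(\bY_{\Omega^c}\mid\bX_\cV)$ and using that $\bY_{\Omega^c}$ is determined by $\bX_\cV$, the second term vanishes, so $F(\Omega)=H(\bY_{\Omega^c}\mid\bX_{\Omega^c})$. By construction $\Lambda_\Omega$ is precisely the block of the transfer map carrying $\bX_\Omega$ to $\bY_{\Omega^c}$, so \eqref{eq:detIO} gives $\bY_{\Omega^c}=\Lambda_\Omega\bX_\Omega+ M\,\bX_{\Omega^c}$ for some matrix $M$. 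Conditioning on $\bX_{\Omega^c}$ fixes the second summand, and since entropy is translation invariant and $\bX_\Omega$ is independent of $\bX_{\Omega^c}$, this yields $H(\bY_{\Omega^c}\mid\bX_{\Omega^c})=H(\Lambda_\Omega\bX_\Omega)$.

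Third, I would apply \Pref{lemma-deterministic} with the matrix $A=\Lambda_\Omega$ and $\bx=\bX_\Omega$. Because $\bX_\Omega$ is uniform over $\F^{q|\Omega|}$, the proposition shows $\Lambda_\Omega\bX_\Omega$ is uniform over its image $\{\by_1,\ldots,\by_{|\sG|}\}$, a set of cardinality $|\sG|=p^{\text{rank}(\Lambda_\Omega)}$. Hence $H(\Lambda_\Omega\bX_\Omega)=\log_2|\sG|=\text{rank}(\Lambda_\Omega)\log_2 p$, i.e.\ $F(\Omega)=(\log_2 p)\,D(\Omega)$. Since scaling a submodular function by the positive constant $\log_2 p$ preserves submodularity, $D(\Omega)$ is submodular, completing the argument.

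I expect the only delicate point to be checking that \Tref{thm-submodular-mutual-information} legitimately applies to a noiseless channel: the conditional independence of the $\bY_j$ given the inputs holds only in the degenerate sense of deterministic maps, and one must confirm that the entropy bookkeeping (in particular $H(\bY_{\Omega^c}\mid\bX_\cV)=0$ and the translation-invariance step) remains valid in this case. Everything else is routine linear algebra together with a direct application of \Pref{lemma-deterministic}.
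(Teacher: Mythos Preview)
Your proposal is correct and follows essentially the same route as the paper: choose i.i.d.\ uniform inputs at each node, observe that the deterministic channel trivially satisfies the conditional-independence hypothesis of \Tref{thm-submodular-mutual-information}, reduce $F(\Omega)$ to $H(\Lambda_\Omega\bX_\Omega)$, and invoke \Pref{lemma-deterministic} to identify this with the rank. The only cosmetic difference is that you track the factor $\log_2 p$ explicitly (since the paper's entropies are in base two), whereas the paper silently switches to base-$p$ logs in the final step; your version is arguably cleaner on this point.
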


\begin{rmk}
A special case of \Tref{thm:3} for {\em layered} ADT networks was 
proved in earlier works~\cite{Goemans,Yazdi}.
\end{rmk}

\begin{proof}
In the network, assume node $i$ sends $b_i$ symbols $x_{i,1}, x_{i,2}, 
\ldots, x_{i,b_i}$ with $x_{i,j}\in \F$. We assume $x_{i,j}$'s drawn i.i.d. with uniform probability distribution over $\F$, i.e. $p(x_{i,j} = q) = 1/|\F|$ for all $q \in \F$. 
From the definition of transition matrix, $\Lambda_\Omega$, if we assume for the cut
$\Omega$, $\mathbf{s} = (s_1, s_2, \ldots, s_k)^t$ symbols are being sent from
nodes in $\Omega$ and
$\mathbf{r} = (r_1, r_2, \ldots, r_\ell)^t$ symbols are being received by nodes
in $\Omega^c$ then $\mathbf{r} = \Lambda_\Omega \mathbf{s}$. Then we can write
\begin{align*}
I(\bX_\Omega; \bY_{\Omega^c} | \bX_{\Omega^c}) = & H(\bY_{\Omega^c} | \bX_{\Omega^c})
- H(\bY_{\Omega^c} | \bX_{\Omega}, \bX_{\Omega^c} ) \\ 
\stackrel{(a)}{=} & H(\bY_{\Omega^c} | \bX_{\Omega^c}) \\
 = & H(\Lambda_\Omega \mathbf{s} | \bX_{\Omega^c} ) \\
\stackrel{(b)}{=} & \log_p |\sG| = \text{rank}(\Lambda_\Omega)
\end{align*}
where $\sG$ is the set of cosets of  $\cN$ where $\cN = \{\mathbf{s} : \Lambda_\Omega \mathbf{s} = 0 \}$.
Equality (a) is because $\bY_{\Omega^c}$ is a deterministic function 
of $\bX_\Omega$ and (b) is the result of \Pref{lemma-deterministic}
and the fact the $\mathbf{s}$ has uniform probability distribution. 

Notice that for the independent probability distribution on the
sources the received signals are independent conditioned on transmitted
signals so, based on \Tref{thm-submodular-mutual-information}, 
$I(\bX_\Omega; \bY_{\Omega^c} | \bX_{\Omega^c})$
which is equal to $D(\Omega)$ is submodular. 
\end{proof}

\subsection{Gaussian relay network}
\label{subsec:gaussian}
The Gaussian network model captures the effects of broadcasting, superposition and noise of power constrained wireless networks. In this model, at any time index (which we omit) the received signal at node $j \in \cV \backslash \{s\}$ is given by
\begin{equation}
Y_j=\sum_{i\in \cV \backslash\{d\}} h_{ij} X_i+ N_j
\end{equation}
where $X_i\in\mathbb{C}$ is the transmitted signal at node $i$, subject to an average power constraint $E(|X_i|^2)\le 1$, $h_{ij} \in \mathbb{C}$ is the channel gain from node $i$ to node $j$, and $N_j \in \mathcal{CN}(0,1)$ is additive white circularly symmetric complex Gaussian noise, independent for different $j$. 

It has been show in \cite[Theorem 2.1]{OD10} that using lattice codes for transmission and quantization at the relays, all rates $R$ between source $\{s\}$ and destination $\{d\}$ satisfying
\begin{equation}
R \le \min_\Omega I(\bX_\Omega; \bY_{\Omega^c}|\bX_{\Omega^c})- |\cV|
\label{eq:RGaussian}
\end{equation}
can be achieved, where $\Omega$ is a source-destination cut of the network and $\bX_\Omega = \{X_i, i\in \Omega\}$ are i.i.d. $\mathcal{CN}(0, 1)$. 
In addition,
the restriction to i.i.d. Gaussian input distributions is within $|\cV|$ bits/s/Hz of the cut-set upper bound~\cite{Salman-Gaussian}. Therefore the rate achieved using lattice codes in the above result is within $2|\cV|$ bits/s/Hz of the capacity of the network.

The following corollary is an immediate consequence of \Tref{thm-submodular-mutual-information}.
\begin{cor} The function $F(\Omega)=I(\bX_\Omega; \bY_{\Omega^c}|\bX_{\Omega^c})$ with the elements of $\bX_\Omega$ being i.i.d. $\mathcal{CN}(0, 1)$ is submodular.
\label{cor:4}
\end{cor}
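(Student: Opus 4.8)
The plan is to recognize that this is a direct specialization of \Tref{thm-submodular-mutual-information}, so the only work is to verify that the Gaussian relay network with i.i.d. $\mathcal{CN}(0,1)$ inputs meets the two hypotheses of that theorem. Once both are checked, submodularity of $F$ follows verbatim from the theorem's conclusion under the identification $A = \Omega$, $\Omega^c = \cV\backslash A$.

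First I would verify the input-independence hypothesis. Since the transmit signals $X_i$, $i \in \cV$, are chosen i.i.d. $\mathcal{CN}(0,1)$, the joint density factorizes as $p(X_1,\ldots,X_{|\cV|}) = p_1(X_1)\cdots p_{|\cV|}(X_{|\cV|})$ by definition, so this hypothesis holds trivially. Next I would verify the conditional-independence hypothesis on the received signals. From the channel law $Y_j = \sum_{i} h_{ij} X_i + N_j$, once the whole input vector $\bX_{\cV}$ is fixed the term $\sum_i h_{ij} X_i$ is a deterministic constant, so the conditional law of $Y_j$ given $\bX_{\cV}$ is just the law of the noise $N_j$ shifted by that constant. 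Because the noises $N_j$ are independent across $j$ by assumption, the conditional density factorizes as $f(Y_1,\ldots,Y_{|\cV|} \mid X_1,\ldots,X_{|\cV|}) = f_1(Y_1 \mid X_1,\ldots,X_{|\cV|})\cdots f_{|\cV|}(Y_{|\cV|} \mid X_1,\ldots,X_{|\cV|})$, which is exactly the second hypothesis.

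With both hypotheses in hand, I would simply invoke \Tref{thm-submodular-mutual-information} to conclude that $F(\Omega) = I(\bX_\Omega; \bY_{\Omega^c} \mid \bX_{\Omega^c})$ is submodular. I do not expect a genuine obstacle, since the statement is immediate; the only point worth flagging is that in the Gaussian setting the entropies appearing in the proof of \Tref{thm-submodular-mutual-information} are differential entropies. Because the paper uses $H(\cdot)$ for both the discrete and continuous cases, the chain-rule manipulations carry over unchanged, and in particular the key step that $Y_a$ is conditionally independent of $\bY_{\cV\backslash A\cup a}$ given $\bX_{\cV}$ is precisely the conditional-independence hypothesis verified above.
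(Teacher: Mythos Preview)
Your proposal is correct and matches the paper's approach: the paper presents this corollary as an immediate consequence of \Tref{thm-submodular-mutual-information} with no separate proof, and your verification of the two hypotheses (i.i.d.\ inputs and conditionally independent outputs given the inputs, the latter following from the additive independent-noise channel law) is exactly the routine check that justifies invoking the theorem.
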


Due to \Cref{cor:4} the minimization in (\ref{eq:RGaussian}) is the minimization of a submodular function and the resulting optimal value is within $2 |\cV|$ of the capacity of the network.\footnote{Notice that $I(\bX_\Omega; \bY_{\Omega^c}|\bX_{\Omega^c})=\log \det (I+H H^\dag)$ where $H$ is the matrix of channel gains from nodes in $\Omega$ to nodes in $\Omega^c$ and $H^\dag$ is the conjugate transpose of $H$. Therefore, it is easy to compute the capacity of each cut.}.

\subsection{Wireless erasure network}

In \cite{Dana} the authors introduce 
a special class of wireless networks, called wireless erasure networks. In these networks, a directed graph $\cG=(\cV, \cE)$ defines the interconnections between nodes. To model the broadcast effect of wireless networks, the signals on all outgoing arcs of any given node are equal to each other. There is no interference among multiple arcs arriving at a given node in this model, and the signals on the various arcs are erased independently of each other. We assume binary transmitted signals at each node, i.e. $X_i\in\{0,1\}, i\in\cV\backslash\{d\}$, but all the results can be extended to models with larger input alphabets. It has been shown in \cite{Dana} that the capacity of the network is 
\begin{equation}
\label{eq:erasure_C}
C=\min_\Omega F(\Omega)=\min_\Omega\sum_{i \in \Omega} \left(
1 - \prod_{j \in \Omega^c} \epsilon_{ij} \right)
\end{equation}
where $\epsilon_{ij}$ is the probability of erasure when node $i$ is
sending information to node $j$. We show in the following theorem that $F(\Omega)$ is submodular.
\begin{thm}
The function $F(\Omega)=\sum_{i \in \Omega} \left(
1 - \prod_{j \in \Omega^c} \epsilon_{ij} \right)$ equals $I(\bX_\Omega; \bY_{\Omega^c} | \bX_{\Omega^c})$  where $X_i$ are i.i.d. $\sim \text{Bernoulli}(1/2)$ for $i\in \Omega$. Therefore, $F(\Omega)$ is submodular.
\end{thm}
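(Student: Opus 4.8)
The plan is to establish the claimed identity
$$
I(\bX_\Omega; \bY_{\Omega^c}\mid \bX_{\Omega^c}) = \sum_{i\in\Omega}\left(1-\prod_{j\in\Omega^c}\epsilon_{ij}\right)
$$
by direct computation, after which submodularity is immediate from \Tref{thm-submodular-mutual-information}. To apply that theorem I first need to check its two hypotheses in the erasure setting: the inputs are chosen independently (here $X_i\sim\text{Bernoulli}(1/2)$, i.i.d.\ over $i\in\Omega$, which is exactly the independent-input assumption), and the received signals are conditionally independent given the transmitted signals. The latter holds because in this model each node $j$ sees the signals on its incoming arcs erased \emph{independently} of the erasures on all other arcs, so $\bY_{\Omega^c}$ factorizes as a product of per-node conditional laws given $\bX_\Omega,\bX_{\Omega^c}$. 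Once both hypotheses are verified, \Tref{thm-submodular-mutual-information} gives submodularity of $F(\Omega)=I(\bX_\Omega;\bY_{\Omega^c}\mid \bX_{\Omega^c})$ for free, and the only remaining work is identifying this mutual information with the closed-form sum in \eq{eq:erasure_C}.

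For the identity itself I would expand the mutual information as
$$
I(\bX_\Omega;\bY_{\Omega^c}\mid \bX_{\Omega^c}) = H(\bY_{\Omega^c}\mid \bX_{\Omega^c}) - H(\bY_{\Omega^c}\mid \bX_\Omega,\bX_{\Omega^c}),
$$
and observe that the second term vanishes, since conditioned on all transmitted signals the received signal at each node in $\Omega^c$ is determined up to the independent erasure events, but more to the point, the only randomness in $\bY_{\Omega^c}$ not yet accounted for comes from the arcs \emph{originating in $\Omega$}; conditioned on $\bX_\Omega$ those arcs carry known symbols and the residual entropy is contributed solely by erasures, which I will need to handle carefully. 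The cleaner route is to note that signals arriving at a node $j\in\Omega^c$ along arcs from other nodes in $\Omega^c$ are known given $\bX_{\Omega^c}$, so they can be subtracted off, and the informative part of $\bY_{\Omega^c}$ reduces to what crosses the cut from $\Omega$ to $\Omega^c$. Because there is no interference among incoming arcs, the contribution of each source node $i\in\Omega$ to the receivers in $\Omega^c$ is carried on a separate, independently-erased arc, so the entropy decomposes as a sum over $i\in\Omega$.

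The key quantity is then, for each $i\in\Omega$, the entropy revealed about $X_i$ across the cut. Node $i$'s bit reaches $\Omega^c$ unerased on at least one arc precisely when not all arcs from $i$ to $\Omega^c$ are erased, an event of probability $1-\prod_{j\in\Omega^c}\epsilon_{ij}$; conditioned on total erasure of all these arcs, $X_i$ remains uniform and contributes nothing, whereas conditioned on at least one arc surviving, the full one bit of $X_i$ is recovered (since all outgoing signals of a node are equal, a single surviving arc suffices). Averaging, the information contributed by node $i$ is $1\cdot\left(1-\prod_{j\in\Omega^c}\epsilon_{ij}\right)$ bits, and summing over the independent contributions $i\in\Omega$ yields the claimed formula.

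The main obstacle I anticipate is the bookkeeping in the entropy decomposition: because a single transmitted bit $X_i$ can be received (redundantly) at several nodes of $\Omega^c$ over distinct arcs with independent erasures, I must be careful that the per-node contributions genuinely decouple and that conditioning on the erasure \emph{pattern} (rather than just on whether any arc survives) does not create hidden dependencies across different $i\in\Omega$. Since the erasures on arcs from distinct source nodes are mutually independent and there is no interference at the receivers, these contributions do separate, and the computation collapses to the stated sum; making this separation rigorous while correctly accounting for the redundant reception is the delicate part, after which submodularity follows directly by invoking \Tref{thm-submodular-mutual-information}.
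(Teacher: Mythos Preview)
Your proposal is correct and reaches the same endpoint as the paper, but the paper's route is slightly more direct. You expand the mutual information on the $\bY$ side, $H(\bY_{\Omega^c}\mid\bX_{\Omega^c})-H(\bY_{\Omega^c}\mid\bX_\Omega,\bX_{\Omega^c})$, and immediately run into the fact that the second term does \emph{not} vanish (the erasure indicators remain random even given all inputs); you recognize this and pivot to a per-source argument. The paper instead expands on the $\bX$ side,
\[
I(\bX_\Omega;\bY_{\Omega^c}\mid\bX_{\Omega^c})=H(\bX_\Omega\mid\bX_{\Omega^c})-H(\bX_\Omega\mid\bY_{\Omega^c},\bX_{\Omega^c})=\sum_{i\in\Omega}\bigl(H(X_i)-H(X_i\mid\bY_{\Omega^c})\bigr),
\]
using independence of the $X_i$ and of the erasures to split the sum. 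This sidesteps the erasure-entropy bookkeeping entirely: one only needs $H(X_i)=1$ and $H(X_i\mid\bY_{\Omega^c})=\prod_{j\in\Omega^c}\epsilon_{ij}$, the latter because $X_i$ is determined unless every arc from $i$ into $\Omega^c$ is erased. Your ``cleaner route'' and final per-source computation are exactly this calculation, so the substance is the same; the paper simply chooses the decomposition that makes your ``main obstacle'' disappear from the outset. The invocation of \Tref{thm-submodular-mutual-information} for submodularity is identical in both.
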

\begin{proof}
For i.i.d. $X_i\sim \text{Bernoulli}(1/2)$, we can write
\begin{align*}
I(\bX_\Omega;& \bY_{\Omega^c} | \bX_{\Omega^c}) 
\\ = & H(\bX_\Omega | \bX_{\Omega^c}) - H(\bX_\Omega | \bY_{\Omega^c}, \bX_{\Omega^c}) 
\\
\stackrel{(a)}{=} & \sum_{i \in \Omega} \left( H(X_i) - H(X_i | \bY_{\Omega^c}) \right)
\\
\stackrel{(b)}{=} & \sum_{i \in \Omega} \left( 1 - H(X_i | \bY_{\Omega^c}) \right)
\\
= & \sum_{i \in \Omega} \Big( 1 - \\ & \sum_{y_j \in\{1,0,e\}, j \in \Omega^c} 
\hspace{-4ex} H(X_i|Y_j = y_j, j\in \Omega^c)p(Y_j = y_j , j \in \Omega^c) \Big)
\\
= & \sum_{i \in \Omega} \Big( 1 - H(X_i|Y_j = e, j\in \Omega^c)p(Y_j = e , j \in \Omega^c) \Big)
\\
\stackrel{(c)}{=} & 
\sum_{i \in \Omega} \left(
1 - \prod_{j \in \Omega^c} \epsilon_{ij} \right).
\end{align*}
We used in (a) the independence among $X_i$ and the channel erasures, in (b) the fact that for $X_i\sim\text{Bernoulli(1/2)}$, $H(X_i)=1$, and in (c) the fact that for $X_i\sim\text{Bernoulli(1/2)}$,
$H(X_i|Y_j = e, j\in \Omega^c) = 1$ and 
for independent erasures we have $p(Y_j = e , j \in \Omega^c) = \prod_{j\in \Omega^c} \epsilon_{ij}$.
\Tref{thm-submodular-mutual-information}
can be applied to conclude that $F(\Omega)$ is submodular. 
\end{proof}


\section{Power optimization}
\label{sec:power_optimization}

In the previous section, for the Gaussian relay network model, we considered {\em fixed power assignments} to the different nodes in the network, and have shown that a constant gap approximation to the capacity can be efficiently computed. In many applications it is of interest to allocate the nodes' transmission powers to optimize a given objective. For example, in a network where the nodes are battery powered, it may be of interest to maximize the network lifetime while satisfying a baseline quality of service. Alternatively, it may be desirable to maximize the network throughput for a given total power budget. This total power budget may arise due to, e.g., a maximum system weight constraint which is dominated by the battery weight, or a total system cost, which may be heavily influenced by the cost of the batteries. Power allocation optimization may also naturally arise in situations where the channel gains, while known to all the nodes, slowly vary over time. In this case, it may be desirable to optimally allocate power for the current channel condition.


As before, we characterize communication rates in terms of cut-set capacities. We consider a model where the cut-set capacities are functions of the cuts and powers assigned to the nodes in the network:
$F(\Omega, \bp) : \cV \times \R^{|\cV|} \to \R$, and we focus on the Gaussian model where this function depends explicitly on the power assignment,
\begin{equation}
\label{eq:gaussian_cut-set}
F(\Omega, \bp) = I(\bX_\Omega; \bY_{\Omega^c} | \bX_{\Omega^c}) = 
\log \det(I + H_\Omega P_\Omega H_{\Omega}^\dagger)
\end{equation}
where $H_{\Omega}$ is the matrix of channel gains from nodes in
$\Omega$ to nodes in $\Omega^c$, $H^\dagger$ is the conjugate transpose
of $H$, and $P_\Omega$ is a diagonal matrix where the diagonal elements are
the powers of the nodes in $\Omega$. 

We will show in \Lref{lem:K_convex} below that in the Gaussian case $F(\Omega, \bp)$ is a concave function of $\bp$. While the results of this section are stated and proved for the Gaussian model, we conjecture that similar results should hold for other models in which $F(\Omega, \bp)$ is a concave function of $\bp$. 

For Gaussian relay networks, we show that the following
optimization problem can be solved in polynomial time,
\begin{align}
\nonumber
& \underset{R,P, \bp}{\text{minimize}} & & \mu_1 R + \mu_2 P \\ \nonumber
& \text{subject to} & & R \le F(\Omega \cup \{s\}, \bp) \ \text{for all}
\ \Omega \subseteq \cV \backslash \{d\} \\ \nonumber
&&& 0 \le \bp \le \bp_{\max} \\ \nonumber
&&& \sum_{i=1}^{|\cV|} p_i \le P \\ 
&&& R_0 \le R, \ P \le P_{tot}
\label{eq:optim}
\end{align}
for fixed constants $\mu_1, \mu_2, R_0$, $\bp_{\max}$ and $P_{tot}$.
In the rest of the section, 
we denote the feasible set of the optimization \eq{eq:optim} by $\mathcal{K}$.

We use the Ellipsoid method \cite{lovasz,ConvOpt} to show that the optimization \eq{eq:optim} can be solved efficiently. 
We will use the following definitions and result. The reader is referred to \cite{ConvOpt} for more details.

\begin{definition}[Polynomial computability] 
\label{def:1}
A family of optimization programs is polynomially computable if:
\begin{itemize} 
\item [(i)] for any instance of the program and any point $\bx$ in the domain, the objective and its subgradient can be computed in polynomial time in the size of the instance. 
\item [(ii)] for a given measure of infeasibility $\Infeas(\cdot)$, it should be possible to determine if $\Infeas(\bx)\le \ep$ in polynomial time, and when this inequality is not satisfied, it should be possible to find in polynomial time a vector $\bc$ such that
\[
\bc^T \bx > \bc^T \by, \forall \by: \Infeas(\by) \le \ep.
\]
\end{itemize}
\end{definition}

\begin{definition}[Polynomial growth]
\label{def:2}
A family of optimization programs has polynomial growth if the objectives and the infeasibility measures as functions of points $\bx$ in the domain grow polynomially with $\|\bx\|_1$.
\end{definition}

\begin{definition}[Polynomial boundedness of feasible sets]
\label{def:3}
A family of optimization programs has polynomially bounded feasible sets if the feasible set of an instance of the program is contained in an Euclidean ball centered at the origin with radius that grows at most polynomially with the size of the instance.
\end{definition}

\begin{prop}[{\cite[Theorem 5.3.1]{ConvOpt}}]
\label{prop:convopt}
Let $\cP$ be a family of convex optimization programs equipped with infeasibility measure $\Infeas(\cdot)$. Assume that the family is polynomially computable with polynomial growth and with polynomially bounded feasible sets. Then $\cP$ is polynomially solvable.
\end{prop}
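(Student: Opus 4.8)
The plan is to prove the statement by exhibiting the central‑cut Ellipsoid method as the polynomial‑time solver, and then verifying that the three hypotheses supply exactly the inputs that method requires. First I would fix an instance of a program in $\cP$ of size $L$ and a target accuracy $\ep > 0$. By the polynomial boundedness of feasible sets (\Dref{def:3}), the feasible set lies inside a Euclidean ball $E_0$ of radius $R$ with $\log R$ polynomial in $L$; this ball is the initial ellipsoid. The method maintains a nested sequence $E_0 \supseteq E_1 \supseteq \cdots$ in $\R^n$, where each $E_{k+1}$ is the minimum‑volume ellipsoid containing the half of $E_k$ selected by a cutting plane through its center $\bx_k$.

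At each step I would call the oracle guaranteed by polynomial computability (\Dref{def:1}). If $\Infeas(\bx_k) > \ep$, part (ii) returns in polynomial time a vector $\bc$ separating $\bx_k$ from every point $\by$ with $\Infeas(\by)\le\ep$, which I use as a feasibility cut. Otherwise $\bx_k$ is approximately feasible, and part (i) lets me compute the objective and a subgradient in polynomial time; the subgradient supplies an optimality cut that discards the half‑space on which the objective exceeds its value at $\bx_k$. In both cases $E_{k+1}$ is obtained by the standard rank‑one update at cost $O(n^2)$ arithmetic operations, and it satisfies the volume‑shrinkage bound $\mathrm{vol}(E_{k+1}) \le e^{-1/(2(n+1))}\,\mathrm{vol}(E_k)$.

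Correctness and the iteration count then come from a volume argument. The polynomial growth hypothesis (\Dref{def:2}) bounds the values of the objective and of $\Infeas(\cdot)$ over $E_0$ by a polynomial in $L$, and since these functions are convex, this value bound yields a Lipschitz modulus on $E_0$ that is again polynomial in $L$. Consequently a ball of radius $\rho$ about an exact optimizer consists of points that are simultaneously $\ep$‑feasible and within $\ep$ of the optimum, with $\log(1/\rho)$ polynomial in $L$ and $\log(1/\ep)$. After $N = O\!\left(n^2 \log(R/\rho)\right)$ iterations the volume of $E_N$ falls below that of this ball, so the near‑optimal set cannot have been discarded unless some queried center $\bx_k$ already achieved $\ep$‑accuracy. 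Thus $N$ is polynomial, each iteration is polynomial, and outputting the best approximately feasible center encountered gives an $\ep$‑accurate solution in polynomial time, which is the definition of polynomial solvability.

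The main obstacle I expect is not the geometry but the arithmetic. The ellipsoid update involves a square root, so an exact recursion would blow up the bit‑length of the $E_k$. I would handle this in the classical way: round each update to polynomially many bits and slightly inflate the ellipsoid so that the inclusion invariant (that $E_k$ still contains the target near‑optimal set) survives the rounding. Checking that this inflation does not spoil the volume‑shrinkage estimate, and that the Lipschitz moduli extracted from \Dref{def:2} are genuinely polynomially bounded, is the delicate quantitative heart of the argument; the remainder is bookkeeping that assembles \Dref{def:1}–\Dref{def:3} into the per‑iteration cost, the iteration count, and the final accuracy guarantee.
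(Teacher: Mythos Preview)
The paper does not prove this proposition at all: it is quoted verbatim as \cite[Theorem 5.3.1]{ConvOpt} and used as a black box, with no accompanying argument. Your sketch of the central-cut Ellipsoid method is the standard proof one finds in the cited reference, and the way you map \Dref{def:1}--\Dref{def:3} onto the separation oracle, the Lipschitz bound, and the initial ball is correct in outline; but there is nothing in the paper to compare it against.
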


In order to use \Pref{prop:convopt} we need to check that the optimization \eq{eq:optim} is a convex program. Since the objective function is linear, we only need to check that the feasible set $\mathcal{K}$ is convex.  

\begin{lem}
\label{lem:K_convex}
The feasible set $\mathcal{K}$ is a convex set.
\end{lem}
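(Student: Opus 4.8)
The plan is to exhibit $\mathcal{K}$ as an intersection of convex sets, using that convexity is preserved under (arbitrary) intersection. Reading off the constraints of \eqref{eq:optim}, the feasible set is the intersection over all $\Omega \subseteq \cV\backslash\{d\}$ of the sets
\[
\mathcal{H}_\Omega \ = \ \{(R,P,\bp) : R \le F(\Omega\cup\{s\}, \bp)\}
\]
together with the sets cut out by $0 \le \bp \le \bp_{\max}$, $\sum_i p_i \le P$, $R_0 \le R$, and $P \le P_{tot}$. These last constraints are all linear inequalities in $(R,P,\bp)$, hence define half-spaces (or the box $0 \le \bp \le \bp_{\max}$), which are convex; so the only substantive task is to show that each $\mathcal{H}_\Omega$ is convex.

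For that, the key step is to prove that $\bp \mapsto F(\Omega\cup\{s\},\bp)$ is concave, because the hypograph $\{(R,\bp): R \le g(\bp)\}$ of a concave function $g$ is convex, and adjoining the free variable $P$ (on which $F$ does not depend) merely forms a cylinder over this convex set, which remains convex. First I would invoke \eqref{eq:gaussian_cut-set} to write, for the fixed cut $S = \Omega\cup\{s\}$, $F(S,\bp) = \log\det(I + H_S P_S H_S^\dagger)$, and observe that, with $h_i$ denoting the column of $H_S$ associated with node $i$,
\[
H_S P_S H_S^\dagger \ = \ \sum_{i\in S} p_i\, h_i h_i^\dagger ,
\]
so that $M(\bp) := I + H_S P_S H_S^\dagger$ is an \emph{affine} function of $\bp$ (the powers enter linearly through the diagonal matrix $P_S$), and $M(\bp) \succeq I \succ 0$ for every $\bp \ge 0$, so $M(\bp)$ lies in the cone of positive definite matrices throughout the feasible region.

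The main technical fact I would then use is that $\log\det(\cdot)$ is concave on the cone of positive definite matrices; composing this concave function with the affine map $\bp \mapsto M(\bp)$ yields a concave function of $\bp$, which is exactly what is needed. To justify concavity of $\log\det$ I would reduce to concavity along an arbitrary line: for positive definite $A$ and Hermitian $B$, set $\phi(t) = \log\det(A + tB)$ and factor out $A^{1/2}$ to obtain $\phi(t) = \log\det A + \sum_i \log(1 + t\lambda_i)$, where the $\lambda_i$ are the eigenvalues of $A^{-1/2} B A^{-1/2}$; since each $t \mapsto \log(1+t\lambda_i)$ is concave, so is $\phi$, and concavity along every line gives concavity of $\log\det$ on the PD cone.

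I expect the main obstacle to be nothing deep but rather the bookkeeping needed to verify the two structural points above: that $M(\bp)$ is genuinely affine in $\bp$, and that it stays positive definite on the whole feasible region so that $\log\det$ is only ever evaluated where it is concave (both guaranteed by the identity term and by $p_i \ge 0$). Once concavity of $F(\Omega\cup\{s\},\cdot)$ is established, each $\mathcal{H}_\Omega$ is convex, the remaining constraints are convex, and $\mathcal{K}$, being their intersection, is convex, completing the proof.
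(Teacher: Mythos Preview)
Your proposal is correct and follows essentially the same approach as the paper: both arguments show concavity of $\bp \mapsto F(\Omega,\bp)$ via the concavity of $\log\det$ on positive definite matrices composed with the affine map $\bp \mapsto I + H_\Omega P_\Omega H_\Omega^\dagger$, deduce convexity of each set $\{(R,P,\bp): R \le F(\Omega,\bp)\}$ as a hypograph, and then intersect with the half-spaces defined by the remaining linear constraints. The paper simply cites the concavity of $\log\det$ from \cite{boyd} rather than deriving it along lines as you do, but otherwise the structure is identical.
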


\begin{proof}
First we show that the function 
$F(\Omega, \bp) = \log \det(I + H_\Omega P_\Omega H_{\Omega}^\dagger)$
is concave in $\bp$ where $0 \le \bp$ for any cut $\Omega \subseteq \cV$.
For any two vectors $\bp_1,\bp_2 \ge 0$ and $\gamma \in [0,1]$ we can write
\begin{align*}
 \gamma F(\Omega, \bp_1) & + (1-\gamma) F(\Omega, \bp_2 ) 
\\
& = \gamma \log\det  ( I + H_\Omega P_1 H_{\Omega}^\dagger ) 
\\
& \quad \quad +(1-\gamma) \log\det(I + H_\Omega P_2 H_{\Omega}^\dagger)  
\\
& \stackrel{(a)}{\le} \log\det(\gamma(I + H_\Omega P_1 H_{\Omega}^\dagger) 
\\
&\quad \quad + (1-\gamma) (I + H_\Omega P_2 H_{\Omega}^\dagger) ) 
\\
& = \log\det(I + H_\Omega(\gamma P_1 + (1-\gamma)P_2) H_\Omega^\dagger)
\end{align*}
where $P_1$ ($P_2$) is a diagonal matrix where the diagonal elements are
the elements of $\bp_1$ ($\bp_2$) that belong to $\Omega$ (respectively),
and (a) follows from the concavity of $\log\det X$, for
$X \succ 0$~\cite{boyd}.

Next,
consider the set $\sC(\Omega) = \{(R, P, \bp) : R \le F(\Omega, \bp) \}$.
Choose two vectors $(R_1, P_1, \bp_1)$
and $(R_2, P_2, \bp_2)$ in $\sC(\Omega)$. We will show that $\sC(\Omega)$ is a convex set by showing that for any $\gamma \in [0,1]$
the vector $\gamma(R_1, P_1, \bp_1) + (1-\gamma)(R_2, P_2, \bp_2)$ is also in $\sC(\Omega)$. Notice that $R_1 \le F(\Omega, \bp_1)$ and $R_2 \le F(\Omega, \bp_2)$ if and only if $(R_1, P_1, \bp_1) \in \sC(\Omega)$
and $(R_2, P_2, \bp_2) \in \sC(\Omega)$. Therefore
\begin{align*}
\gamma R_1 +  (1-\gamma) R_2 
& \le \gamma F(\Omega,\bp_1) + (1-\gamma) F(\Omega, \bp_2)
\\
& \stackrel{(a)}{\le} F(\Omega, \gamma \bp_1 + (1-\gamma ) \bp_2)
\end{align*}
where $(a)$ is due to the fact that $F(\Omega, \bp)$ is a concave
function with respect to $\bp$.  Thus, $\sC(\Omega)$ is a convex set for any $\Omega \subseteq \cV$. 

It is easy to check that the sets 
$\sP_1 = \{(R,P,\bp): 0 \le \bp \le \bp_{\max}\}$, 
$\sP_2 = \{(R, P, \bp): \sum p_i \le P \}$,
$\sP_3 = \{(R,P, \bp): R_0 \le R \}$, and
$\sP_4 = \{(R,P, \bp): P \le P_{tot} \}$ 
are also
convex sets, and, as a result, $\mathcal{K} = \cap_{\Omega \subseteq \cV \backslash \{d\}}
\sC(\Omega \cup \{s\}) \cap \sP_1 \cap \sP_2 \cap \sP_3 \cap \sP_4$ is a convex set.
\end{proof}

Having proved that \eq{eq:optim} is a convex program, in order to use \Pref{prop:convopt} we need to check that the conditions of Definitions \ref{def:1}, \ref{def:2}, and  \ref{def:3} are satisfied. Part (i) of Definition \ref{def:1} follows from the linearity of the objective in \eq{eq:optim}. For part (ii) of Definition \ref{def:1} we specify an infeasibility measure $\Infeas(\cdot):\R^{|\cV|+2} \to \R$ as follows\footnote{With a slight abuse of notation we use $\max$ on vector quantities by taking the maximum over the components of the vector.}:
\begin{align}
\label{eq:infeas}
\Infeas((R, P, \bp))=&\max\bigg\{0, -\bp, \bp-\bp_{\max}, R_0 - R, P-P_{tot}\nonumber\\
&\sum_{i=1}^{|\cV| }p_i -P, R-\min_{\Omega \in \cV\backslash \{d\}} F(\Omega\cup \{s\},  \bp)\bigg\}
\end{align}

The conditions of part (ii) of Definition \ref{def:1} are verified in the following theorem.
\begin{thm}
\label{thm:separation}
For a given vector $(R, P, \bp) \in \R^{|\cV|+2}$ and any $\ep > 0$ we can either  (a) determine in polynomial time
if $\Infeas((R,P,\bp))\le \ep$ and if not (b) 
find in polynomial time a vector $\bc \in \R^{|\cV|+2}$, such that for every $(R', P', \bp')$ satisfying $\Infeas((R', P', \bp'))\le \ep$, $\bc^T (R', P', \bp') < \bc^T (R, P, \bp)$.
\end{thm}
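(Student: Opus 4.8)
The plan is to regard $\Infeas(\cdot)$ as a convex function and hand the Ellipsoid method a separation oracle built from its subgradients. First I would record that every branch inside the $\max$ in \eq{eq:infeas} is a convex function of $(R,P,\bp)$: the branches $-\bp$, $\bp-\bp_{\max}$, $R_0-R$, $P-P_{tot}$ and $\sum_i p_i - P$ are affine, while the last branch $R-\min_\Omega F(\Omega\cup\{s\},\bp)$ is convex because, by \Lref{lem:K_convex}, each $F(\Omega\cup\{s\},\bp)$ is concave in $\bp$, so $\min_\Omega F(\Omega\cup\{s\},\bp)$ is concave and its negative is convex. Hence $\Infeas(\cdot)$, a pointwise maximum of convex functions, is convex, and the set $\{(R',P',\bp'):\Infeas((R',P',\bp'))\le\ep\}$ is convex.

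For part (a), I would evaluate $\Infeas((R,P,\bp))$ term by term. All affine branches cost $O(|\cV|)$. The only nontrivial quantity is $\min_{\Omega\subseteq\cV\backslash\{d\}} F(\Omega\cup\{s\},\bp)$. Since forcing $s$ into the cut preserves submodularity (the map $\Omega\mapsto F(\Omega\cup\{s\},\bp)$ is a contraction of a submodular function), \Cref{cor:4} shows this map is submodular, so its minimum together with a minimizing cut $\Omega^\star$ can be found in polynomial time by a submodular-minimization routine such as \cite{Orlin}. Comparing the resulting value with $\ep$ settles part (a).

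For part (b), suppose $\Infeas((R,P,\bp))>\ep$ and let $g_{k^\star}$ be a branch attaining the maximum in \eq{eq:infeas}. I would output as $\bc$ a subgradient of $g_{k^\star}$ at $(R,P,\bp)$; since a subgradient of an active branch of a pointwise maximum is a subgradient of the maximum, $\bc\in\partial\,\Infeas((R,P,\bp))$. The separation property then follows at once from the subgradient inequality: for any $(R',P',\bp')$ with $\Infeas((R',P',\bp'))\le\ep$,
\begin{align*}
\ep & \ge \Infeas((R',P',\bp')) \\
& \ge \Infeas((R,P,\bp)) + \bc^T\big((R',P',\bp')-(R,P,\bp)\big) \\
& > \ep + \bc^T\big((R',P',\bp')-(R,P,\bp)\big),
\end{align*}
so $\bc^T(R',P',\bp') < \bc^T(R,P,\bp)$, as required. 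If the active branch is affine, $\bc$ is its constant gradient. The only branch needing work is $R-\min_\Omega F(\Omega\cup\{s\},\bp)$: using the minimizer $\Omega^\star$ already produced in part (a), the vector $\nabla_\bp F(\Omega^\star\cup\{s\},\bp)$ is a supergradient of the concave map $\bp\mapsto\min_\Omega F(\Omega\cup\{s\},\bp)$ (because that map is bounded above by $F(\Omega^\star\cup\{s\},\cdot)$, which is concave and tangent to it at $\bp$), so $\bc=\big(1,\,0,\,-\nabla_\bp F(\Omega^\star\cup\{s\},\bp)\big)$ is a subgradient of this branch. The gradient of $\log\det(I+H_{\Omega^\star\cup\{s\}}P_{\Omega^\star\cup\{s\}}H_{\Omega^\star\cup\{s\}}^\dagger)$ with respect to $\bp$ has a closed form obtained by differentiating $\log\det$, and is computable in polynomial time.

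The main obstacle is the polynomial-time evaluation of $\min_\Omega F(\Omega\cup\{s\},\bp)$ together with the recovery of a minimizing cut $\Omega^\star$, which is precisely where submodularity (\Cref{cor:4}) is indispensable; once $\Omega^\star$ is available, assembling the separating vector reduces to a routine $\log\det$ gradient computation. A secondary point to check carefully is that the subgradient of the min-of-concave branch is correctly given by the gradient at the active cut, which is exactly the supergradient argument noted above.
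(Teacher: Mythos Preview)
Your proposal is correct and essentially identical in content to the paper's proof: both use submodular minimization (via \Cref{cor:4}) to evaluate the cut-set term for part (a), and for part (b) both produce exactly the same separating vectors---the coordinate vectors for the affine branches and $\bc=(1,0,-\nabla_\bp F(\Omega^\star\cup\{s\},\bp))$ for the cut-set branch. The only difference is packaging: you frame everything uniformly through the convexity of $\Infeas$ and the subgradient inequality, whereas the paper runs the same argument case by case without naming $\Infeas$ as convex; the resulting oracle is the same.
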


\begin{proof}
Part (a) requires checking that each of the arguments of the $\max$ of (\ref{eq:infeas}) is smaller than or equal to $\ep$ in polynomial time. The first six terms are linear functions and can be easily computed. The last term can be compared to $\ep$ by performing a minimization of a submodular function, which as was shown in Section \ref{subsec:gaussian}, can also be computed in polynomial time.

We focus on condition (b). In this case $\Infeas((R, P, \bp)) > \ep$, meaning that at least one of the arguments of the $\max$ of (\ref{eq:infeas}) is larger than $\ep$. We consider each case separately.

If $-p_i > \ep$ we set $\bc=-\be_{i+2}$ where $\be_i$ has a one in the $i^\text{th}$ position and zeros everywhere else, which can be easily checked to satisfy the condition of part (b).

Similarly, for the cases $p_i - p_{\max,i} > \ep$, $R_0 - R > \ep$, $P-P_{tot} > \ep$ and $\sum_{i=1}^{|\cV|}p_i -P$ we set $\bc=\be_{i+2}$, $\bc=-\be_1$, $\bc=\be_2$, $\bc=(0,-1,1,\ldots,1)$ respectively.

For the last case, let $\Omega^* = \arg \min_{\Omega \in \cV\backslash \{d\}} F(\Omega\cup \{s\},  \bp)$. We have $R-F(\Omega^*\cup \{s\},  \bp)> \ep$. Since the function $F(\Omega^*,\bp)$ is continuous and differentiable with respect
to $\bp$, and the set $\tilde\sC(\Omega^*,\ep)=\{(R, P, \bp) : R \le F(\Omega^*, \bp) +\ep \}$ is convex (which can be shown as in the proof of Lemma \ref{lem:K_convex}),
then the vector ${\bc}=(1, 0, -\grad_{\bp} F(\Omega^*, \bp) )$ is the normal
to a hyperplane that separates $(R,P,\bp)$ from the set $\tilde\sC(\Omega^*,\ep)$. In other words, for all $(R', P', \bp')\in \sC(\Omega^*)$ we have $\bc^T (R', P', \bp') < \bc^T (R, P, \bp)$. 
Noting that $\{(R', P', \bp'):\Infeas((R', P', \bp')) \le \ep\} \subseteq \tilde\sC(\Omega^*,\ep)$, we conclude that part (b) holds in this case as well.
\end{proof}

Having proved these preliminary results, we are ready to prove the main result of this section.

\begin{thm}
\label{thm:polytime}
The optimization in (\ref{eq:optim}) can be solved in polynomial time on the size of the problem.
\end{thm}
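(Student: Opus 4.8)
The plan is to invoke \Pref{prop:convopt} directly. By \Lref{lem:K_convex} the feasible set $\mathcal{K}$ is convex, and since the objective $\mu_1 R + \mu_2 P$ is linear, \eq{eq:optim} is a convex program; it therefore remains only to verify the three hypotheses of \Pref{prop:convopt}, namely polynomial computability, polynomial growth, and polynomial boundedness of the feasible set. Two of these are already in hand: part~(i) of \Dref{def:1} is immediate from the linearity of the objective, whose value and constant subgradient $(\mu_1, \mu_2, 0, \ldots, 0)$ are trivially computable, while part~(ii) of \Dref{def:1} is exactly the content of \Tref{thm:separation} together with the infeasibility measure \eq{eq:infeas}. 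Thus the remaining work is to establish \Dref{def:2} and \Dref{def:3}.

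For polynomial growth (\Dref{def:2}) I would argue termwise. The objective is linear and hence grows linearly in $\|(R,P,\bp)\|_1$. The infeasibility measure \eq{eq:infeas} is a maximum of several quantities, of which all but the last are affine in $(R,P,\bp)$ and so grow at most linearly; the final term $R - \min_\Omega F(\Omega \cup \{s\}, \bp)$ involves $F(\Omega, \bp) = \log\det(I + H_\Omega P_\Omega H_\Omega^\dagger)$, which on the region where the powers are non-negative grows only logarithmically in $\|\bp\|_1$ and is therefore dominated by a linear function. Since a maximum of polynomially growing functions grows polynomially, \Dref{def:2} holds.

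For polynomial boundedness of the feasible set (\Dref{def:3}) I would collect the explicit constraints: $0 \le \bp \le \bp_{\max}$ bounds $\bp$, the pair $\sum_i p_i \le P$ and $P \le P_{tot}$ bounds $P$, and $R_0 \le R$ bounds $R$ from below. An upper bound on $R$ follows from any single cut constraint $R \le F(\Omega \cup \{s\}, \bp)$, since $F$ is continuous in $\bp$ and hence bounded on the compact box $0 \le \bp \le \bp_{\max}$ by a constant determined by the channel gains. Collecting these bounds places $\mathcal{K}$ inside a Euclidean ball whose radius is controlled by $\bp_{\max}$, $P_{tot}$, and the channel magnitudes, giving \Dref{def:3}. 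With all three hypotheses verified, \Pref{prop:convopt} delivers polynomial solvability.

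I expect the main obstacle here to be not a new idea---the substantive step, the separating-hyperplane oracle built from submodular minimization, is already furnished by \Tref{thm:separation}---but rather the careful bookkeeping needed to confirm that these growth and radius estimates are polynomial in the \emph{encoding size} of the instance. In particular, one must check that the ball radius, which depends on $\bp_{\max}$, $P_{tot}$ and the $\log\det$ cut values, is polynomial in the bit-length of the input data rather than merely finite, and that the infeasibility measure is well-behaved on the relevant domain where the powers are non-negative.
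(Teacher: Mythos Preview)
Your proposal is correct and follows essentially the same route as the paper: invoke \Pref{prop:convopt}, use \Lref{lem:K_convex} for convexity, \Tref{thm:separation} for the separation oracle, and then check polynomial growth (via the linearity of all but the $\log\det$ term, which grows logarithmically) and polynomial boundedness of $\mathcal{K}$ (via the box $0 \le (R,P,\bp) \le (R_{\max}, P_{tot}, \bp_{\max})$ with $R_{\max}$ coming from a cut value at $\bp_{\max}$). The paper handles the encoding-size bookkeeping you flag no more carefully than you do, simply asserting it ``can be easily checked.''
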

\begin{proof}
The proof uses \Pref{prop:convopt}, which requires verifying the convexity of the problem together with the conditions of polynomial computability, polynomial growth, and polynomial boundedness of the feasible set. Convexity was proved in Lemma \ref{lem:K_convex}, while polynomial computability was shown in \Tref{thm:separation}. Polynomial growth follows from the fact that $F(\Omega,\bp)$ is the $\log$ of a polynomial on $\bp$ with degree at most $|\cV|$, while the objective and remaining terms that define the infeasibility measure are linear on $(R, P,\bp)$. Finally, to check that feasible set if polynomially bounded, we note that the feasible set is a subset of the hypercube
\[
\{(R,P,\bp): 0\le (R,P,\bp) \le (R_{\max}, P_{tot}, \bp_{\max}) \}
\]
where $R_{\max}= \min_{\Omega \in \cV\backslash\{d\}} F(\Omega \cup \{s\},\bp_{\max})$. It follows that the feasible set is contained in the Euclidean ball centered at the origin with radius $\|(R_{\max}, P_{tot}, \bp_{\max})\|_2$, which can be easily checked to grow polynomially on the size of the problem.
\end{proof}

The general optimization problem (\ref{eq:optim}) can be specialized to a power minimization with a minimum rate constraint, and to a rate maximization with a total power constraint. Both problems can be solved in polynomial time, as stated in the following corollaries to \Tref{thm:polytime}.

\begin{cor}
The following power minimization problem can be solved in polynomial time. 
\begin{align}
\nonumber
& \underset{\bp}{\text{minimize}} & & \sum_{i=1}^{|\cV|} p_i \\ \nonumber
& \text{subject to} & & R_0 \le F(\Omega \cup \{s\}, \bp) \ \text{for all} \ 
\Omega \subseteq \cV \backslash \{d\} \\
&&& 0 \le \bp \le \bp_{\max} .
\label{eq:power_min}
\end{align}
\end{cor}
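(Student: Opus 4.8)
The plan is to realize the power minimization problem~\eq{eq:power_min} as a particular instance of the general optimization~\eq{eq:optim} and then invoke \Tref{thm:polytime} directly. Since the general problem has already been shown to be polynomially solvable, the entire task reduces to choosing the free parameters of~\eq{eq:optim} correctly and checking that the resulting instance has the same optimizers and optimal value as~\eq{eq:power_min}.

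First I would fix the parameters of~\eq{eq:optim} as $\mu_1 = 0$, $\mu_2 = 1$, keep $R_0$ and $\bp_{\max}$ as given in~\eq{eq:power_min}, and set $P_{tot} = \sum_{i=1}^{|\cV|} p_{\max,i}$. With this choice the constraint $P \le P_{tot}$ is never binding, because $0 \le \bp \le \bp_{\max}$ already forces $\sum_i p_i \le \sum_i p_{\max,i} = P_{tot}$. Moreover, with $\mu_1 = 0$ and $\mu_2 = 1$ the objective of~\eq{eq:optim} collapses to simply $P$.

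Next I would verify that the two problems have identical optimal value and identical sets of optimal power vectors $\bp$. The two observations driving this are: (i) the constraints $R_0 \le R$ together with $R \le F(\Omega \cup \{s\}, \bp)$ for all $\Omega \subseteq \cV \backslash \{d\}$ imply $R_0 \le F(\Omega \cup \{s\}, \bp)$ for every such $\Omega$, which is exactly the rate constraint of~\eq{eq:power_min}; conversely, any $\bp$ feasible for~\eq{eq:power_min} extends to a feasible point of~\eq{eq:optim} by taking $R = R_0$ and $P = \sum_i p_i$; and (ii) since we minimize $P$ subject to $\sum_i p_i \le P$, every optimal point satisfies $P = \sum_i p_i$, so the objective $P$ of the specialized~\eq{eq:optim} agrees with the objective $\sum_i p_i$ of~\eq{eq:power_min}. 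Together these show that projecting onto the $\bp$-coordinates maps optima of~\eq{eq:optim} to optima of~\eq{eq:power_min} and back, preserving the optimal value.

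The closest thing to an obstacle is confirming that introducing the auxiliary variables $R$ and $P$ does not break any of the hypotheses that \Tref{thm:polytime} relies on. I would note that this is immediate: $R$ ranges in the bounded interval $[R_0, R_{\max}]$ and $P$ in $[0, P_{tot}]$, both bounded polynomially in the instance size, so the feasible set of the specialized problem remains polynomially bounded and the separation oracle of \Tref{thm:separation} applies verbatim. Hence~\eq{eq:power_min} is a legitimate member of the family~\eq{eq:optim}, and \Tref{thm:polytime} yields its polynomial-time solvability.
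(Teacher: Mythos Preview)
Your proposal is correct and follows exactly the paper's approach: the paper's proof is the one-liner ``follows from \Tref{thm:polytime} by setting $\mu_1=0$, $\mu_2=1$, $P_{tot}=\sum_{i=1}^{|\cV|} p_{\max,i}$,'' and you have simply spelled out the equivalence between the specialized~\eq{eq:optim} and~\eq{eq:power_min} in more detail than the paper bothers to.
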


\begin{proof}
The corollary follows from \Tref{thm:polytime} by setting $\mu_1=0$, $\mu_2=1$, $P_{tot}=\sum_{i=1}^{|\cV|} p_{\max,i}$.
\end{proof}

\begin{cor}
The following rate maximization can be solved in polynomial time.
\begin{align}
\nonumber
& \underset{R,\bp}{\text{maximize}} & & R \\ \nonumber
& \text{subject to} & 
& 0 \le R \le F(\Omega\ \cup \{s\}, \bp ) \ \text{for all} \ \Omega \subseteq \cV \backslash \{d \} \\ \nonumber
&&& \sum_{i=1}^{|\cV|} p_i \le P_{\text{tot}} \\ 
&&& 0 \le \bp \le \bp_{\max}
\label{eq:rate_max}
\end{align}
\end{cor}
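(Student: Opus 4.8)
The plan is to derive this corollary from \Tref{thm:polytime} in exactly the same way the preceding power-minimization corollary was obtained, namely by exhibiting \eq{eq:rate_max} as a particular instance of the general program \eq{eq:optim}. First I would choose the constants $\mu_1 = -1$, $\mu_2 = 0$, and $R_0 = 0$, keeping $P_{tot}$ and $\bp_{\max}$ as given. With these choices the objective becomes $\mu_1 R + \mu_2 P = -R$, so minimizing it is the same as maximizing $R$; moreover the constraint $R_0 \le R$ collapses to $0 \le R$, and the cut constraints $R \le F(\Omega \cup \{s\}, \bp)$ together with $0 \le \bp \le \bp_{\max}$ are already present verbatim in \eq{eq:rate_max}.

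The second step is to account for the auxiliary variable $P$, which appears in \eq{eq:optim} but not in \eq{eq:rate_max}. Since $\mu_2 = 0$, the variable $P$ is absent from the objective and enters only through the pair of constraints $\sum_{i=1}^{|\cV|} p_i \le P$ and $P \le P_{tot}$. I would argue that, after projecting out $P$, these two constraints are together equivalent to the single constraint $\sum_{i=1}^{|\cV|} p_i \le P_{tot}$: any $\bp$ with $\sum_{i=1}^{|\cV|} p_i \le P_{tot}$ can be completed to a feasible point of \eq{eq:optim} by taking $P = P_{tot}$, and conversely every feasible point of \eq{eq:optim} satisfies $\sum_{i=1}^{|\cV|} p_i \le P \le P_{tot}$. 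Hence the feasible region of \eq{eq:optim}, restricted to the $(R,\bp)$ coordinates, coincides with the feasible region of \eq{eq:rate_max}, and the two programs share the same optimal value and optimizer $(R,\bp)$.

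The final step is simply to invoke \Tref{thm:polytime}, which asserts that \eq{eq:optim} is solvable in polynomial time for fixed constants $\mu_1,\mu_2,R_0,\bp_{\max},P_{tot}$; the reduction above then transfers this guarantee to \eq{eq:rate_max}.

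The one point that needs care — and the closest thing to an obstacle — is checking that the choice $\mu_1 = -1 < 0$ remains within the scope of \Tref{thm:polytime}, whose earlier instantiations used nonnegative weights. Here I would observe that the proof of that theorem uses only the \emph{linearity} of the objective $\mu_1 R + \mu_2 P$ (for the polynomial computability of the objective and its subgradient, and for the polynomial-growth condition), together with the convexity and polynomial boundedness of the feasible set $\mathcal{K}$, none of which depend on the signs of $\mu_1$ or $\mu_2$. Thus the negative weight on $R$ causes no difficulty, and the corollary follows.
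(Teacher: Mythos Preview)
Your proposal is correct and takes exactly the same approach as the paper: the paper's entire proof is the single sentence ``The corollary follows from \Tref{thm:polytime} by setting $\mu_1 = -1$, $\mu_2 = 0$ and $R_0 = 0$.'' Your version simply fills in the details the paper leaves implicit, namely the elimination of the auxiliary variable $P$ and the observation that the sign of $\mu_1$ is immaterial to the proof of \Tref{thm:polytime}.
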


\begin{proof}
The corollary follows from \Tref{thm:polytime} by setting  $\mu_1 = -1, \mu_2 = 0$ and $R_0 = 0$.
\end{proof}


\section{Algorithms and simulations}
\label{sec:simulations}

In this section we study different algorithms and and provide simulation results
regarding submodular function minimization and power allocation problems.
In the first subsection we look at minimum norm algorithm for submodular function
minimization and we show that this algorithm can find the approximation to capacity
of layered Gaussian relay networks with more that 300 nodes in a couple of minutes.
In the second subsection we propose a heuristic algorithm to find 
the optimum power allocation for Gaussian relay networks. 


\subsection{submodular function minimization}

One approach to solve the submodular minimization problem 
due to Lov\'asz is based on {\em extension} of the set function $f:2^\cV\to \mathbb{R}$ 
 to a
convex function $g:[0,1]^{|\cV|}\to \mathbb{R}$
that agrees with $f$ on the vertices of the hypercube $[0,1]^{|\cV|}$, with a
guarantee that $\min_{A \subseteq  \cV} f(A)$ is equal to $\min_\bx g(\bx)$
for $\bx \in [0,1]^{|\cV|}$. In this section we assume the normalization $f(\emptyset)=0$.

The Lov\'asz extension $g$ of {\em any} set function $f$ can be defined as follows.
For a given $\bx\in [0,1]^{|\cV|}$ order the elements of
$\cV$ such that $x(v_1) \ge x(v_2) \ge \cdots \ge x(v_n)$, where $x(v_i)$ is the $v_i$th element of the vector $\bx$. Set $\lambda_0=1-x(v_1)$, $\lambda_i = x(v_i)-x(v_{i+1})$, $\lambda_n=x(v_n)$, and
$$
g(\bx) \ \deff \ \sum_{i=1}^n \lambda_i f(\{v_1, v_2, \ldots, v_i\}).
$$
Define $\mathbf{1}_{\emptyset}=\mathbf{0}\in \mathbb{R}^n$ and $\mathbf{1}_{\{v_1, v_2, \ldots, v_i\}}$ as an $n$ dimensional
vector such that the coordinates $v_1, v_2, \ldots, v_i$ are equal to one and 
all the other coordinates are equal to zero. Then, it is easy to see that $\bx = \sum_{i=0}^n \lambda_i \mathbf{1}_{\{v_1, v_2, \ldots, v_i\}}$ , $\sum_{i=0}^n \lambda_i = 1$ and $\lambda_i \ge 0$. 
 So, $\bx$ is a unique linear convex combination of some vertices of the hypercube and
$g(\bx)$ is linear convex combination of values of $f$ on those vertices.

A key result is that if $f$ is submodular its Lov\'asz extension $g$ is a convex function~\cite{lovasz,Fujishige}.  
In addition, finding the minimum of the submodular function $f$
over subsets of $\cV$ is equivalent to finding the minimum of the convex
function $g$ in the hypercube $[0,1]^{|\cV|}$. 
The optimization can be done in polynomial time using Ellipsoid algorithm~\cite{lovasz}.

There are other algorithms with faster running time to solve the
submodular minimization problem \cite{Iwata,Iwata2,Orlin}. To the best of our knowledge, the running time of the fastest algorithm is in the order of $O(n^5 \alpha +n^6)$, where $\alpha$ is the time that 
the algorithms takes to compute $f(A)$ for any subset $A \subseteq \cV$~\cite{Orlin}. 
For ADT networks, Gaussian relay networks, and erasure networks, $\alpha$ is the time to compute: the rank of $n\times n$
matrices, the determinant of $n \times n$ matrices, and equation
\eq{eq:erasure_C}, respectively. 

However, for networks of large size, a complexity of $O(n^5 \alpha +n^6)$ may still be computationally cumbersome. As a result, in these cases it is desirable to have faster algorithms. Recently, Fujishing~\cite{Fujishige,Fujishige2} showed that the minimization of any submodular function can be cast
as a minimum norm optimization over the base polytope of $f$, $B_f = P_f \cap \{ \bx \ | \ \sum_{i \in \cV} x(i) = f(\cV) \}$, where
$$
P_f \ \deff \ 
\left\{ \bx \in \mathbb{R}^n \ \Big| \ \forall A \subseteq \cV : \sum_{i \in A} x(i) \le f(A) \right\}
$$
and the corresponding minimum norm optimization is 
\begin{eqnarray}
\label{eq:min-norm}
\text{minimize}\ \   ||\bx||_2, \hspace{3ex}
\text{subject to } \bx \in B_f.
\end{eqnarray}
Letting $\bx^*$ be the solution of this minimization, 
the set $A^* = \{v_i  : x^*(v_i) < 0 \}$ is the solution
to $\min_A f(A)$. 
Whether the above optimization problem can be solved in polynomial time is an open problem. However empirical studies \cite{Fujishige2} have shown that this algorithm has comparable or even faster running times than the other algorithms with polynomial time performance guarantees. 

\begin{figure}[tb]

\centering
\includegraphics[width=0.45\textwidth]{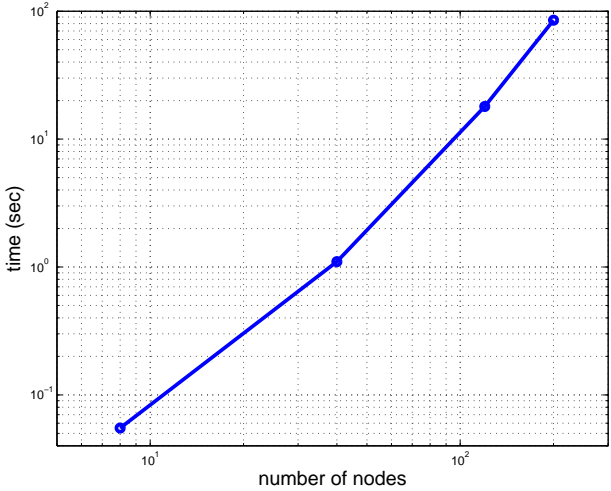}

\caption{Running time of minimum norm algorithm for a layered Gaussian
relay network. Each layer has four nodes.}
\label{fig:1}

\end{figure}

In our specific setting, for layered Gaussian relay networks of size up to around $300$ nodes with 4 nodes per layer, we were able to find the approximate capacity (cf. (\ref{eq:RGaussian})) in order of minutes (see Figure~\ref{fig:1}). 
In order to solve the minimization \eq{eq:min-norm} we used 
the Matlab package provided in~\cite{Krause}.


\subsection{Power allocation}
In Section~\ref{sec:power_optimization} 
we have shown that the Ellipsoid method can be used
to solve the optimization in \eq{eq:power_min} in polynomial time. While in theory this 
result shows that the optimization in \eq{eq:power_min} is tractable, in practice the Ellipsoid 
method has a number of shortcomings that limit its usability. On the one hand the running time of the Ellipsoid 
method can be large compared to alternative algorithms, and on the other hand, for high dimensional 
problems it has shown numerical instability. In this section, we propose a heuristic 
algorithm to solve the optimization in \eq{eq:power_min} and show that this algorithm converges to the
right solution. While the running time can be exponential on the network size, we show 
through simulations that the algorithm often converges within a time proportional to the network size.

Our proposed algorithm, Algorithm~\ref{alg:SFM_opt} (see pseudo-code below), is based on the cutting plane methods~\cite{boyd} in convex optimization. 
The optimization in \eq{eq:power_min} contains exponentially many constraints of the form
\begin{equation}
\label{eq:constraints}
R_0 \le F(\Omega \cup \{s\}, \bp) \ \text{for} \ \Omega \subseteq \cV \backslash \{d\}.
\end{equation}
In Algorithm~\ref{alg:SFM_opt} we first find the min-cut corresponding to assigning maximum
power to all nodes in the network:
$$
\Omega_1 = \text{argmin}_{\Omega \subseteq \cV \backslash \{d\}} 
F(\Omega \cup \{s\}, \bp_{\max}).
$$
Then, we modify the optimization in \eq{eq:power_min} by replacing the constraint \eq{eq:constraints} with
$$
R_0 \le F(\Omega_1 \cup \{s\}, \bp).
$$
The resulting convex program can be easily solved since it contains few constraints. 

After optimization, we let $\bp^*$ be the optimum power allocation for the current set
of constraints and set
$$
\Omega_i = \text{argmin}_{\Omega \subseteq \cV \backslash \{d\}} 
F(\Omega \cup \{s\}, \bp^*).
$$
We iteratively add the constraint
$$
R_0 \le F(\Omega_i \cup \{s\}, \bp)
$$
to our set of constraints, and solve the optimization again. We stop
if the new constraint is already in the set of constraints.

\begin{algorithm}
\caption{Power minimization}
\label{alg:SFM_opt}
\begin{algorithmic}
\REQUIRE Channel gain matrix $H$, desired rate $R$, vector of nodes' power constraints $\bp_{\max}$.
\ENSURE  Min-cut, Power assignment $\bp^*$ that achieves approximate to rate $R$ with minimum sum of powers.
\STATE $\sC \leftarrow \{\}$, $\bp^* \leftarrow 0$
\STATE $\Omega^* \leftarrow \min_{\Omega \subseteq \cV \backslash\{d\}} F(\Omega\cup\{s\},\bp_{\max})$ 
\IF {$ R \le F(\Omega^*\cup\{s\},\bp_{\max}) $ }
\WHILE{ $\Omega^* \notin \sC$ and $F(\Omega^* \cup \{s\}, \bp^*) < R$}
\STATE $\sC \leftarrow \sC \cup \{\Omega^*\}$
\STATE $\bp^* \leftarrow  \min \sum p_i$
\STATE \ \ \ \ \ \ \ \ subject to:
\STATE \ \ \ \ \ \ \ \ \ \ $R \le F(\Omega \cup \{s\}, \bp)$ for all $\Omega \in \sC$
\STATE \ \ \ \ \ \ \ \ \ \ $0 \le \bp \le \bp_{\max}$
\STATE $\Omega^* \leftarrow \min_{\Omega \subseteq \cV \backslash\{d\}} F(\Omega\cup\{s\},\bp^*)$
\ENDWHILE
\RETURN $\Omega^* \cup \{s\}$, $\bp^*$
\ELSE
\PRINT The constraints are infeasible.
\ENDIF
\end{algorithmic}
\end{algorithm}

Since in each iteration the algorithm adds a new constraint to the constraint set and the number of constraints in \eq{eq:constraints} is finite, the algorithm is guaranteed to find the optimum power in a finite number of iterations, which can be exponential. 

We used simulations to test the performance of the algorithm for networks of varying size $n$ ranging from 10 to 40. For each $n$, we generated 300 random networks with channels gains drawn i.i.d. using a $\cN(0,1)$ distribution. We set the desired transmission rate $R_0=4$ in \eq{eq:power_min} and set a maximum power constraint in each node to $p_{\max}=100$. The results are shown in Figures~\ref{fig:opt_N_const}--\ref{fig:opt_N_node_sparse}, where the the vertical bars represent $\pm 1$ standard deviation around the mean computed over the 300 random networks.


Figure~\ref{fig:opt_N_const} shows the number of iterations of Algorithm 1 as a function of the number of nodes in the network.  We see that the number of iterations grows $O(n^{3/2})$ with the number of nodes.

\begin{figure}[tb]

\centering
\includegraphics[width=0.47\textwidth]{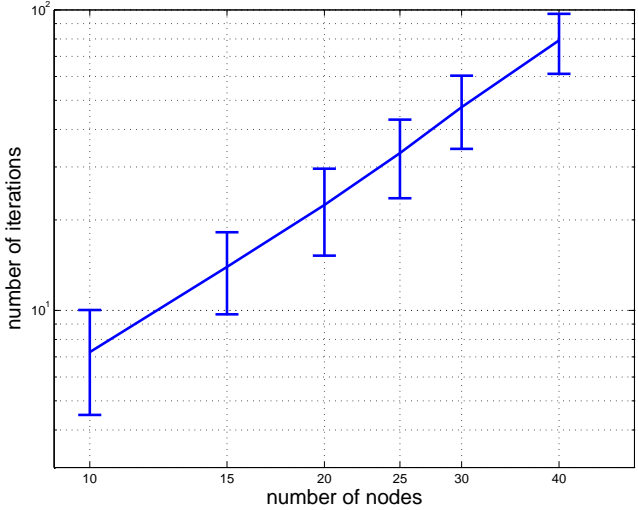}

\caption{Number of constraints in \Aref{alg:SFM_opt} when the algorithm terminates.
The error bars represent one standard deviation.}
\label{fig:opt_N_const}

\end{figure}

\begin{figure}[tb]

\centering
\includegraphics[width=0.47\textwidth]{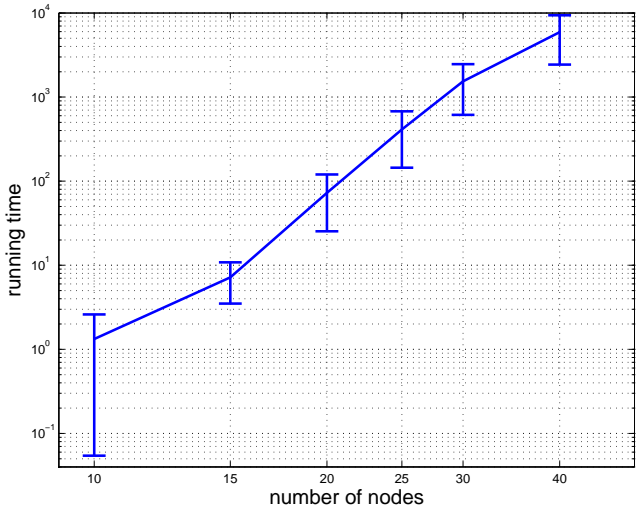}

\caption{Running time in seconds of the power optimization in \Aref{alg:SFM_opt}.}
\label{fig:opt_time}

\end{figure}

\begin{figure}[tb]

\centering
\includegraphics[width=0.47\textwidth]{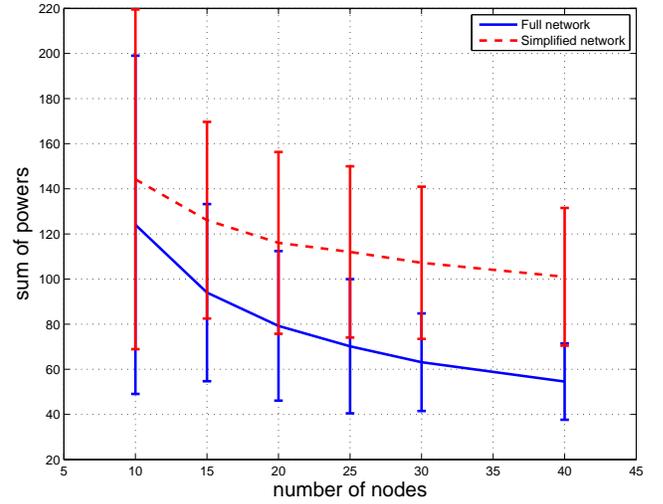}

\caption{Minimum sum of powers for optimization \eq{eq:power_min} when
$R_0 = 4$ and the network is generated randomly as described in the paper.}
\label{fig:opt_power}

\end{figure}

\begin{figure}[tb]

\centering
\includegraphics[width=0.47\textwidth]{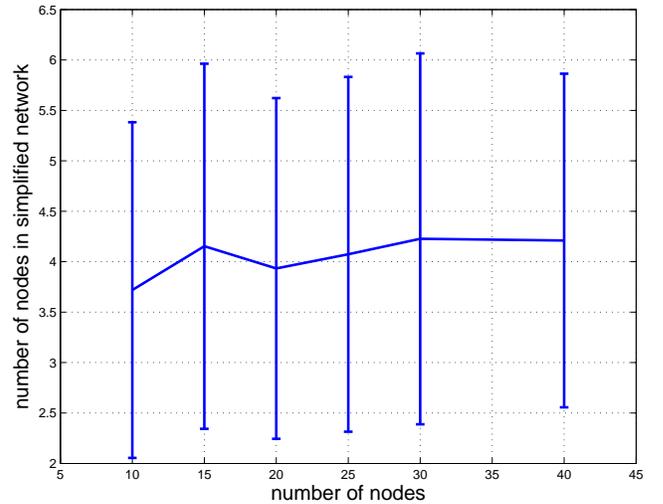}

\caption{Number of nodes in the simplified network.
}
\label{fig:opt_N_node_sparse}

\end{figure}

In Figure~\ref{fig:opt_time}, we present the simulation time as a function of the network size. We observe a running time that grows slower than $O(n^6)$. 
The figure also shows that the power optimization of networks of 40 nodes completes in less than two hours on an Intel Xeon quad core CPU running at 2.33 GHz.

The minimum sum of powers to approximately achieve $R_0=4$ for networks of different size is presented
in Figure~\ref{fig:opt_power} with a blue solid line. Interestingly, the plot shows that the minimum power concentrates around the mean. In addition, the figure shows diminishing returns in total power savings resulting from increasing the network size. 

For the special case of the diamond relay network with $N$ relays, \cite{NOEF11} shows that a fraction $k/(k+1)$ of the network capacity can be approximately achieved by using only $k$ relays. In our setting of a general Gaussian relay network and sum-power minimization, we are interested in investigating whether it is possible to remove a large fraction of the nodes from the network without significantly affecting its performance.

In order to determine which nodes to remove from the network, we solve the sum-power minimization and compare the optimal power allocation $p^*_i$ of each node $i$ to a threshold $P_{th}$. All relays with $p^*_i < P_{th}$ are removed from the network. Let $\sN$ be the set of nodes with $p^*_i \ge P_{th}$.
We optimize the power allocation for the network with node set $\{s\} \cup \sN \cup \{d\}$ and determine whether the problem is feasible for $R_0=4$. If the problem is infeasible, we enlarge $\sN$ by adding more relay nodes in decreasing order of $p^*_i$, until the problem becomes feasible.

Figure~\ref{fig:opt_power} shows in the red dashed curve the resulting minimum sum of powers obtained by setting $P_{th}=1$. Figure~\ref{fig:opt_N_node_sparse} shows the corresponding size of the set $\sN$. We observe in Figure~\ref{fig:opt_N_node_sparse} that the number of nodes with allocated power $p^*_i$ exceeding $P_{th}=1$ (possibly including more relays to make the problem feasible) remains fairly constant as the size of the network $n$ increases. This means that most of the power is allocated to a small subset of the nodes. 

Removing the remaining nodes from the network and optimizing the power allocation again over the resulting simplified network results in the minimum total power plotted in Figure~\ref{fig:opt_power}. This figure shows that even though the number of nodes in the simplified network remains approximately constant as $n$ increases, the total power required to approximately achieve $R_0=4$ decreases with $n$. This is due to the fact that larger $n$ allows to choose the best relays for the simplified network. There is some performance loss in terms of total power due to network simplification but this loss may be compensated by power savings arising from turning off some of the relays. While we have not modeled the power consumption of the clocks, CPU and other subsystems required to keep a relay active, in practice they may become comparable to the power consumed by radio transmissions. This makes network simplification very useful in practice.















\end{document}